\theoremstyle{plain}
\newtheorem{theorem}{Theorem}
\newtheorem{corollary}[theorem]{Corollary}
\newtheorem{definition}[theorem]{Definition}
\newtheorem{remark}[theorem]{Remark}
\numberwithin{equation}{section}
\numberwithin{theorem}{section}
\DeclareMathOperator*{\argmin}{arg\,min}
\newcommand\sig{\sigma}
\newcommand\lam{\lambda}
\newcommand\Ac{\mathcal{A}}
\newcommand\Uc{\mathcal{U}}
\newcommand{\Eb}{\mathbb{E}}
\newcommand{\Pb}{\mathbb{P}}
\newcommand{\Rb}{\mathbb{R}}
\newcommand{\Var}{\mathbb{V}}
\newcommand{\dd}{\mathrm{d}}
\newcommand{\ee}{\mathrm{e}}
\newcommand{\pb}{\bar{p}}
\newcommand{\Vb}{\bar{V}}
\newcommand{\thb}{\bar{\theta}}
\newcommand{\tho}{\theta_1^*}
\newcommand{\tht}{\theta_2^*}
\begin{document}
	
	\title{A Two-layer Stochastic Game Approach to  Reinsurance Contracting and Competition}
	
	\author{
		Zongxia Liang
		\thanks{Department of Mathematical Sciences, Tsinghua University, Beijing 100084, China.   E-mail: liangzongxia@tsinghua.edu.cn}
		\and
		Yi Xia
		\thanks{Corresponding author. Department of Mathematical Sciences, Tsinghua
			University, Beijing 100084, China. E-mail: xia-y20@mails.tsinghua.edu.cn}
		\and
		Bin Zou
		\thanks{Department of Mathematics, University of Connecticut, USA.
			E-mail: bin.zou@uconn.edu}
	}
	
	\date{\today\\Forthcoming in \emph{Insurance: Mathematics and Economics}}	
	\maketitle	
	\begin{abstract}
		We propose a two-layer stochastic game model to study reinsurance contracting and competition in a market with one insurer and two competing reinsurers. 
		The insurer negotiates with both reinsurers simultaneously for proportional reinsurance contracts that are priced using the variance premium principle. The reinsurance contracting between the insurer and each reinsurer is modeled as a Stackelberg game. 
		The two reinsurers compete for business from the insurer and optimize the so-called relative performance, instead of their own surplus, and their competition is settled by a noncooperative Nash game. We obtain a sufficient and necessary condition, related to the competition degrees of the two reinsurers, for the existence of an equilibrium. We show that the equilibrium, if exists, is unique, and the equilibrium strategy of each player is constant, fully characterized in semiclosed form. 
	Furthermore, we obtain interesting sensitivity results for the equilibrium strategies through both analytical and numerical studies.
	\end{abstract}
	
	\noindent
	\textbf{Key words}: Game theory; Stackelberg game; Noncooperative Nash game; Optimal reinsurance; Relative performance

	%
	%

\section{Introduction}\label{sec:intro}

Insurers play a critical role in maintaining the financial stability of households and businesses, but they also face significant claim risks that can threaten their solvency. To mitigate these risks, insurers can diversify their insurance portfolios through reinsurance. Of course, when reinsurers offer reinsurance coverage to insurers, their goal is not solely to help insurers maintain solvency but also to generate profit from such businesses. As the customers of reinsurance are relatively limited,\footnote{{According to the statistics from the Insurance Information Institute, the top three insurance companies for automobile insurance by premiums written in the US in 2023---State Farm (18.3\%), Progressive (15.2\%), and Berkshire Hathaway Inc. (12.3\%)---together account for nearly 50\% of the total \$316.79 billion premiums from about 215 million motorists (insureds) with auto insurance in the US.}} especially when compared to those of regular insurance policies, reinsurers naturally compete for those limited customers (insurers), and the competition, in turn, motivates reinsurers to take into account the competitors' strategies in their own decision making. To capture the contract negotiation and business competition in the reinsurance market, we propose a two-layer stochastic game model with one insurer and two reinsurers and aim to obtain an equilibrium for such a complex game. 

The topic of optimal reinsurance is well studied in the actuarial literature, and papers on dynamic reinsurance often differ in contract types, risk models, premium rules, optimization criteria (preferences), and additional controls (such as investment and dividend decisions). 
The two dominant reinsurance contracts are excess-of-loss reinsurance (see \cite{asmussen2000optimal}) and proportional reinsurance, also called quota-share reinsurance (see \cite{schmidli2001optimal}). Regarding the insurer's risk exposure, a standard choice is the classical Cram\'er-Lundberg model (see \cite{schmidli2002minimizing}), but its approximating diffusion model is equally popular (see \cite{asmussen2000optimal} and \cite{schmidli2001optimal}). When it comes to the premium principles, the expected value principle is arguably the most common option (see \cite{schmidli2001optimal}), and the variance principle is another popular choice (see \cite{chi2012optimal} and \cite{liang2016optimal}). Further generalizations include the mean-CVaR premium principle (see \cite{tan2020optimal}) and extended distortion premium principles (see \cite{jin2024optimal}). Early works often adopt maximizing expected utility or minimizing ruin probability as their optimization criterion, but mean-variance preferences are also frequently used (see \cite{LRZ2015}). Several recent papers take into account ambiguity in modeling and apply different ambiguity preferences in the study (see \cite{GVY2018} for the worst-case approach and \cite{ZL2021} for the $\alpha$-maxmin preferences). Risk constraints, such as VaR and CVaR (CTE), can be incorporated so that the obtained optimal contracts help insurers meet the regulatory requirements in practice (see \cite{tan2009var} and \cite{LMWS2016}). Last, we mention that some recent contributions to this topic include novel features in their models; for instance, \cite{GVY2018} allow the insurer to invest their surplus in a financial market to explore statistical arbitrages caused by mispricing of stocks, and \cite{PCW2021} consider an insurer who possesses {insider information} on the asset prices. We refer interested readers to \cite{cai2020optimal} for a review article and \cite{albrecher2017reinsurance} for a monograph on optimal reinsurance.

Traditionally, research on optimal reinsurance takes the viewpoint of the insurer and seeks an optimal reinsurance contract that optimizes the insurer's objective; see those reviewed above for evidence. However, both parties of a reinsurance contract come to an agreement through bargaining and negotiations.
As such, to better describe the negotiation process, we should propose models that can capture the strategic interplay between the insurer and reinsurer of a contract. 
One obvious solution is to apply game theory and seek an equilibrium contract that takes into account the interests of both parties at the same time. 
Indeed, there has been a burgeoning interest to model reinsurance contracting as a game in recent years. \cite{JRYH2019} introduce a two-person cooperative game and solve it to obtain the equilibrium reinsurance contract. On the other hand, \cite{StackelbergLvYang, chen2019stochastic} model the reinsurance contracting as a Stackelberg game, which is a type of noncooperative game.

In this work, we adopt a game approach to study reinsurance contracting problems. In particular, we follow \cite{StackelbergLvYang} and model reinsurance contracting as a dynamic Stackelberg game. In such a game, the reinsurer is the leader and chooses the premium principle, while the insurer is the follower and chooses its reinsurance coverage. The hierarchical structure reflects the fact that reinsurers often have more advantages in negotiation; mathematically, this feature implies that the reinsurer knows the insurer's optimal decision and uses such information in determining its own optimal strategy (premium). We note that the (dynamic) Stackelberg game framework has already received considerable attention since \cite{StackelbergLvYang, chen2019stochastic}, which are likely the first two papers proposing such a framework. For instance,  \cite{DV2022} generalize \cite{chen2019stochastic} to a mean-variance premium principle and a random planning horizon; \cite{StackelbergBinZou} consider a general L\'evy risk process and allow the insurer and reinsurer to have heterogeneous, ambiguous beliefs on the risk distribution.   
\vskip 4pt 
However, the Stackelberg game model for reinsurance discussed earlier (see \cite{StackelbergLvYang, chen2019stochastic}, \cite{DV2022}, and \cite{StackelbergBinZou}) only considers the simple case of one insurer and one reinsurer, and thus cannot capture the observed fact that insurers often seek reinsurance coverage from \emph{multiple} reinsurers at the same time.\footnote{Reinsurers also compete with each other to win business from large insurers, as argued in \cite{reinsuranceTJ} and  \cite{MMT2023}.}
A minimum model that is consistent with this fact should consist of one insurer and \emph{two} reinsurers; one of such models is proposed in \cite{cao2023areinsurance}. In that paper, the insurer faces \emph{two} Stackelberg reinsurance games, and its optimal reinsurance strategies depend on the premium rules by both reinsurers in the market (one {reinsurer} applies the expected-value principle but the other uses the variance principle). As a result, although the two reinsurers in \cite{cao2023areinsurance} are \emph{not} directly linked, each impacts the other's decision through the common insurer.  \cite{cao2023areinsurance} extend the $2$-reinsurer model in \cite{cao2023breinsurance} to an $n$-reinsurer model, in which all reinsurers apply the variance premium principle. Both  \cite{cao2023breinsurance, cao2023areinsurance} adopt the ambiguity-averse and risk-neutral preferences from \cite{StackelbergBinZou} and obtain unique equilibrium strategies for all players. There are also recent works that propose a similar multi-player setup in their game model. \cite{GYS2020} study an asymmetric information linear-quadratic stochastic Stackelberg differential game with one leader and two followers; see also \cite{kroell2023optimal} for a model with multiple insurers (followers). 
\cite{LSC2024} consider an insurance market consisting of multiple competing insurers with a mean-field type interaction via the relative performance of their terminal wealth.

With the motivations discussed above, we are now ready to present the game model in this paper, which is largely inspired by the one in \cite{cao2023areinsurance}. The reinsurance market is composed of one representative insurer and two competing reinsurers, who both apply the variance premium principle,\footnote{\cite{chi2012optimal} studies optimal reinsurance under variance-related premium principles and argues that they form a crucial family of premium principles in actuarial science. We comment that the variance principle is indeed frequently used in the study of optimal reinsurance problems; see \cite{ZY2012}, \cite{CQSW2016}, \cite{liang2016optimal}, \cite{chen2019stochastic}, and \cite{cao2023breinsurance, cao2023areinsurance} for a short list.} but possibly with different loading factors, to price their reinsurance contracts. 
The insurer seeks proportional reinsurance contracts from both reinsurers simultaneously and thus faces two parallel Stackelberg reinsurance games. Given the variance loading factors $\theta = (\theta_1, \theta_2)$, the insurer's goal is to find its optimal reinsurance ceded proportions $(\bar{p}_1^\theta, \bar{p}_2^\theta)$, in which $\bar{p}_i^\theta$ is the optimal proportion ceded to Reinsurer $i$, $i=1,2$, and, as imagined, depends on the two reinsurers' joint premium strategy $\theta$. The two reinsurers in \cite{cao2023areinsurance} compete in an \emph{indirect} way, but here they compete \emph{directly}. To be precise, Reinsurer $i$ in our model does \emph{not} optimize its own surplus, $X_i(T)$, at the terminal time $T$, but compares its wealth to $X_j(T)$, the competitor, Reinsurer $j$'s wealth  ($i, j = 1,2$ and $i \neq j$). By following \cite{bensoussan2014class}, we define the so-called \emph{relative performance} $X_i(T) - \lam_i X_j(T)$, in which $\lam_i \ge 0$, and assume that Reinsurer $i$ optimizes its relative performance as a way to model \emph{direct} competition.
Therefore, the two reinsurers in our model interact via \emph{two} channels: the relative performance in their optimization criterion and the feedback response from the insurer; we note that \cite{cao2023areinsurance} only consider the latter channel. 
To settle the competition game, we resort to the notion of the classical noncooperative Nash game, in which the two reinsurers make decisions simultaneously.
To summarize, we propose a novel two-layer stochastic game model to study reinsurance contracting and competition, with the former by two parallel Stackelberg games and the latter by a noncooperative Nash game. All three players are utility maximizers, and we further assume that their preferences are given by an exponential, also called CARA (Constant Absolute Risk Aversion), utility function. Under the proposed game, the two reinsurers aim to find their equilibrium premium strategy $\theta_i^*$, $i=1,2$, and the insurer seeks an equilibrium reinsurance strategy $p^*=(p_1^*, p_2^*)$.

The key findings and contributions of this paper are discussed as follows. 
First, we propose a novel two-layer game model with multiple (two) reinsurers, which incorporates desirable features from at least three types of models: dynamic Stackelberg game models (see \cite{StackelbergLvYang}), models with multiple reinsurers (see \cite{cao2023areinsurance}), and game models with relative performance (see \cite{bensoussan2014class}).
Second, we obtain a sufficient and necessary condition for the existence of a game equilibrium, given explicitly by $0 \le \lam_1 \lam_2 < 1$, in which $\lam_i$ is the competition degree parameter of Reinsurer $i$ in its relative performance, $i=1,2$. Such a condition is precise and sharp because we show that if $\lam_1 \lam_2 \ge 1$, the proposed game admits no equilibrium. 
In addition, when $0 \le \lam_1 \lam_2 < 1$ holds, the equilibrium is unique, and all equilibrium strategies are constant. We are able to fully characterize the equilibrium strategies for all players in semiclosed form, subject to finding a unique fixed point of a bivariate function (such a task is easy from the computational point of view). 
Third, for the reinsurers' equilibrium premium strategies, we obtain \emph{analytical} results on the impact of risk aversion $\delta_i$ and competition degree $\lam_i$. We show that the increase of one player's risk aversion will cause \emph{both} reinsurers to charge a higher loading under equilibrium, but competition drives them to lower the premium. 
We also conduct a numerical study to investigate how those model parameters affect the insurer's  reinsurance decisions. The key findings are that the insurer cedes more risk to reinsurers when its own risk aversion increases or when the reinsurers are less risk averse.
\vskip 4pt
The rest of the paper is organized as follows. We introduce the two-layer Stackelberg-Nash game  model in Section \ref{sec:model}. In Section \ref{sec:eq}, we solve the equilibrium strategies for  all players in the Stackelberg game. We then analyze the equilibrium strategies mathematically and give the economic explanation in Section \ref{sec:econ}. Section \ref{sec:Conclusion} concludes this paper. Several  technical proofs are placed in Appendix \ref{sec:app}.

\section{Model}
\label{sec:model}

We consider a reinsurance market consisting of one insurer, labeled as Insurer (player) 0, and two \emph{competing} reinsurers, labeled as Reinsurer (player) 1 and Reinsurer (player) 2, over a finite horizon $[0, T]$, with $T > 0$ denoting the terminal time.  
To account for the competition between the two reinsurers, we assume that the market is formed under a \emph{tree} structure as in \cite{cao2023breinsurance, cao2023areinsurance}; see Figure \ref{fig:tree} for graphic illustration. 
Under such a market formulation, the insurer negotiates reinsurance contracts with both reinsurers \emph{simultaneously}, and the two reinsurers \emph{compete} for business from the insurer.
Note that our model includes two degenerate cases in which the insurer only purchases reinsurance from one reinsurer.
\begin{figure}[h]
	\centering
	\includegraphics[width=0.75 \textwidth, trim = 1in 1.5in 2.5in 1.5in, clip=true]{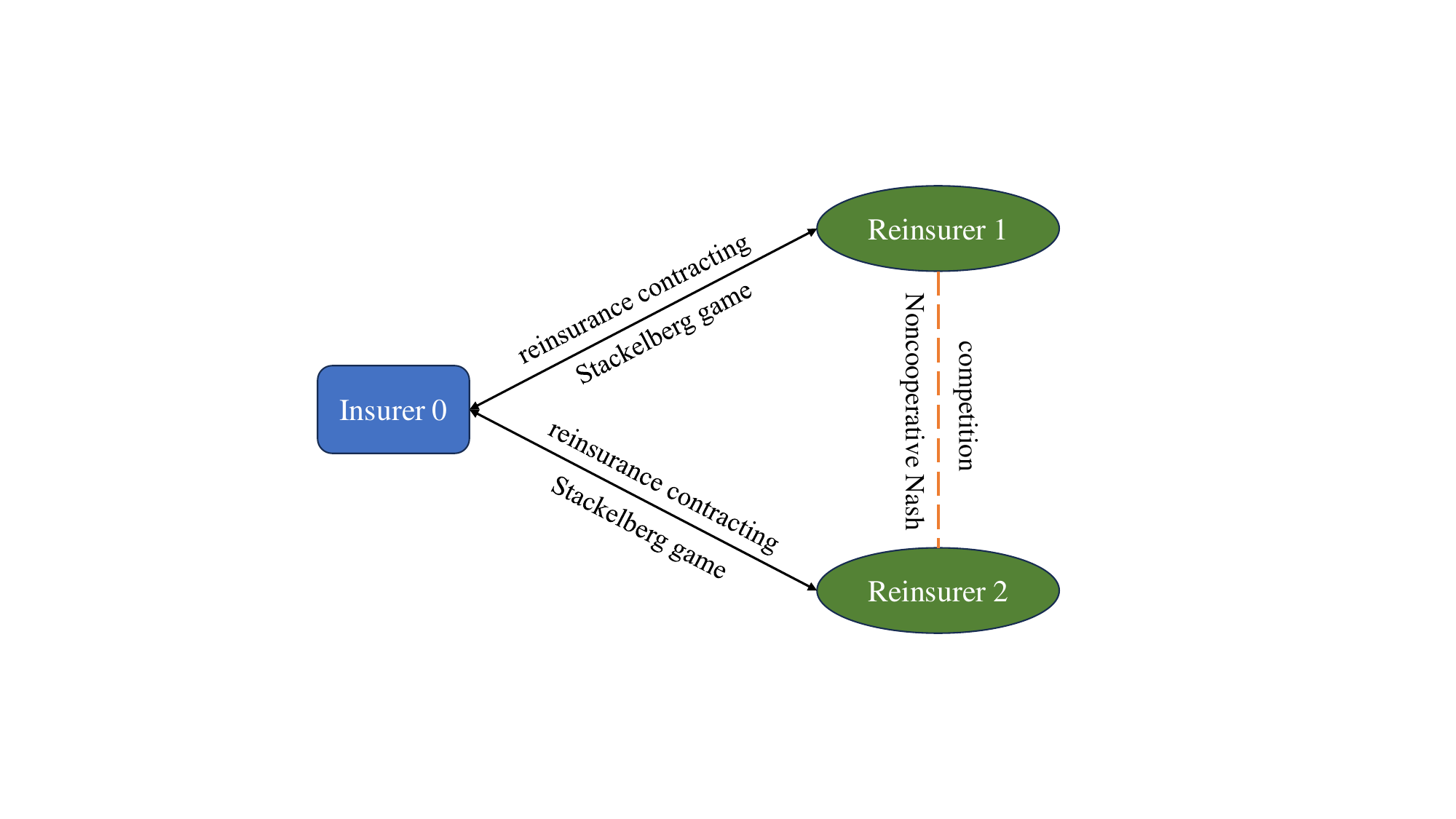}
	\caption{Two-layer Reinsurance Game Model}
	\label{fig:tree}
\end{figure}

\subsection{Strategies}

We assume that the insurer receives income at a constant rate $c > 0$ and is exposed to aggregate risks $L = \{L(t), t \in [0,T]\}$ that follow a standard diffusion model (see, e.g., \cite{schmidli2001optimal}).
To be precise, the dynamics of the risk exposure $L$ is governed by 
\begin{align*}
	\dd L(t) = \mu \, \dd t + \sigma \, \dd W(t) , 
\end{align*}
in which $\mu, \sigma > 0$ are the drift and volatility parameters of the insurer's risk process,  respectively, and  $W=\{W(t), t\in[0,T]\}$ is a one-dimensional Brownian motion defined on a filtered, complete probability space $\left(\Omega, \mathcal{F},\mathbb{F}, \mathbb{P}\right)$. Here, $\mathbb{F}= \left\{\mathcal{F}(t),{t \in[0, T]}\right\} $ is the augmentation of the natural filtration generated by $W$ and satisfies the usual conditions. Hereafter $\Eb(\cdot)$ and $\Var(\cdot)$ denote the expectation and variance operators under $\Pb$, respectively. 
We assume throughout the paper that $L(0) \gg 0$ and $\mu \gg \sigma$ (say $\mu > 3 \sigma$) under which $\Pb(L(t)>0)$ is approximately equal to 1 for all $t \in [0,T]$.

To mitigate the risk exposure, the insurer purchases \emph{proportional} reinsurance from Reinsurers 1 and 2, with ceded proportions $p_1 = \{p_1(t), t\in[0,T]\}$ and $p_2 = \{p_2(t), t\in[0,T]\}$, respectively.
That is, at each time $t \in [0,T]$, $p_1(t) \dd L(t)$ is ceded to Reinsurer 1, $p_2(t) \dd L(t)$ is ceded to Reinsurer 2, while $(1 - p_1(t) - p_2(t)) \dd L(t)$ is retained by the insurer.
To receive the reinsurance coverage, the insurer pays premiums that are computed by the variance premium principle (see, e.g., \cite{chi2012optimal} and \cite{liang2016optimal}). 
Denoting $\theta_i = \{\theta_i(t), t\in[0,T]\}$ the loading factor of Reinsurer $i$, the premium $c_i = \{c_i(t), t \in [0,T]\}$ for a reinsurance contract with ceded proportion $p_i$ satisfies 
\begin{align*}
	c_i(t) \, \dd t = \Eb \big( p_i(t) \, \dd L(t) \big) + \theta_i(t) \, \Var \big( p_i(t) \, \dd L(t) \big), \quad i = 1, 2,
\end{align*}
implying that 
\begin{align}
	\label{eq:c_i}
	c_i (t) = \mu \, p_i(t) + \theta_i(t) \sigma^2  p_i^2(t), \quad i = 1, 2.
\end{align}
Let $p := (p_1, p_2)$ denote the insurer's proportional reinsurance strategy and $\theta = (\theta_1, \theta_2)$ the pair of premium loadings, in which $\theta_i$ is Reinsurer $i$'s premium strategy, $i=1,2$. Throughout the paper, we suppress the dependence of processes (e.g., $c_i$ and $X_i$) on strategy $p$ and/or $\theta$. Given $p$ and $\theta$, the insurer's surplus process $X_0 = \{X_0(t), t \in [0,t]\}$ follows the dynamics 
\!\begin{align}
	\dd X_0(t) &= c \, \dd t - \big(1 - p_1(t) - p_2(t) \big) \, \dd L(t) - \big(c_1(t) + c_2(t) \big) \, \dd t  \\
	&=\! \left(c \!-\! \mu \!-\! \theta_1(t) \sigma^2 p_1^2(t) \! -\! \theta_2(t) \sigma^2 p_2^2(t)   \right) \dd t\!- \!\big(1 - p_1(t)\! - \!p_2(t) \big) \, \dd W(t)
		\label{eq:X_0}
\end{align}
with $X_0 = x_0 \in \Rb$.
Similarly, we obtain the dynamics of Reinsurer $i$'s surplus process $X_i = \{X_i(t), t \in [0,t]\}$ by 
\begin{align}
	\label{eq:X_i}
	\dd X_i(t) = c_i(t) \, \dd t - p_i(t) \, \dd L(t) = \theta_i(t) \sig^2 p_i^2(t) \, \dd t - \sig p_i(t) \, \dd W(t),
\end{align}
with $X_i(0) = x_i \in \Rb$, for $i=1,2$.

We proceed to define the insurer's and reinsurers' admissible strategies below.

\begin{definition}
	\label{def:ad}
	We call $p = (p_1, p_2)$ an admissible (reinsurance) strategy for the insurer if it satisfies the following conditions:
	\begin{itemize}
		\item [(i)] $ p_1$ and $p_2 $ are   {deterministic functions};
		\item [(ii)] $ p_i(t)\in[0,1] $, $i=1,2$, and $ p_1(t)+ p_2(t)\in[0,1] $ almost surely (a.s.), for all $t \in [0,T]$. 
	\end{itemize}
We call $\theta_i$ an admissible (premium) strategy for Reinsurer $i$, $i=1,2$, if {it is a deterministic, positive, and uniformly bounded function.}

Let $\Uc_0$, $\Uc_1$, and $\Uc_2$ denote the set of all admissible strategies for the insurer (Insurer $0$), Reinsurer $1$, and Reinsurer $2$, respectively,\footnote{With slight abuse of notation, we also use $\Uc_i$, $i=0,1,2$, to denote the admissible set at each time $t \in [0,T]$.} and $\Uc := \Uc_0 \times \Uc_1 \times \Uc_2$ denote the joint admissible set of all three players.
\end{definition}

As stated in Section \ref{sec:intro} (and will be described in detail shortly), we consider a game model with three players; as such, the surplus process of one player not only depends on its own strategy but also on the other two players' strategies. For this reason, we will only consider strategies that are \emph{jointly} admissible to all players, $(p, \theta_1, \theta_2) \in \Uc_0 \times \Uc_1 \times \Uc_2$, in the subsequent analysis. Note that given a triplet $u:=(p, \theta_1, \theta_2) \in \Uc := \Uc_0 \times \Uc_1 \times \Uc_2$, the insurer's surplus equation \eqref{eq:X_0} admits a unique solution $X_0 = X_0^{x_0, u}$ for each $x_0 \in \Rb$, and the same is true for Reinsurer $i$'s surplus equation \eqref{eq:X_i} under $X_i(0) = x_i \in \Rb$, $i=1,2$.

\begin{remark}
    As seen from the above setup, we assume \emph{a priori} that the insurer purchases \emph{proportional} reinsurance contracts from the reinsurers, which we explain as follows. First, the \href{https://www.iii.org/publications/insurance-handbook/regulatory-and-financial-environment/background-on-reinsurance}{insurance handbook} of the Insurance Information Institute (III) writes that ``insurers split (a layer of risk) among a number of reinsurance companies each assuming a portion''. This observation from the industry offers direct support to our choice of proportional reinsurance. Second, as nicely pointed out by a referee, it is known from the optimal (re)insurance literature that proportional contracts are optimal under the variance premium principle in various setups (see, e.g., \cite{chen2019stochastic} for a Stackelberg reinsurance game with one reinsurer and \cite{meng2017note} for optimal reinsurance with multiple reinsurers). As such, those results provide some theoretical foundation for the assumption of proportional reinsurance from onset. Last, because we use the diffusion model for the insurer's risk process, considering other forms of reinsurance contracts, say excess of loss reinsurance, is not straightforward. Indeed, existing works under the diffusion risk model often assume the use of proportional contracts; see, e.g., \cite{schmidli2001optimal}.  To further investigate the form of optimal contracts, we may consider the Cram\'er-Lundberg (or more general (Poisson) random measure models) and allow the contract indemnity $I:=I(t, Z)$ to be a function of time $t$ and loss $Z$; see, e.g., \cite{cao2023breinsurance, cao2023areinsurance}. We leave this direction, along with allowing the insurer to invest in a financial market, for future research.
\end{remark}

\subsection{Games}

To start, we assume that all three players, one insurer and two reinsurers, in the market are utility maximizers and  further that the preferences of (Re)Insurer $i$ are characterized by an exponential function 
\begin{align}
	\label{eq:U}
	U_i (x) = - \frac{1}{\delta_i} \, \ee^{-\delta_i x}, \quad i = 0, 1, 2, 
\end{align} 
in which $\delta_i > 0$ is the (constant) absolute risk aversion parameter.

Following \cite{cao2023breinsurance, cao2023areinsurance}, we model the reinsurance contracting between the insurer and each of the two reinsurers as a \emph{Stackelberg game} and the competition between two reinsurers as a \emph{noncooperative Nash game}. In all games, the information is perfect and symmetric to both parties.
In each Stackelberg game, the reinsurer is the leader and chooses its premium strategy $\theta_i$, and the insurer is the follower and chooses its reinsurance strategy $p_i$.
However, different from these two papers, the competition here is modeled via the so-called \emph{relative performance}; to be precise, Reinsurer $i$ in \cite{cao2023areinsurance} optimizes its own surplus (performance) $X_i$, $i=1,2$, but here it optimizes the relative performance $X_i - \lambda_i X_j$, with $j \neq i$, in which $\lambda_i > 0$ measures the competition degree of Reinsurer $i$ relative to its competitor Reinsurer $j$. The limit case of $\lambda_1 = \lambda_2 = 0$ reduces to the scenario investigated in \cite{cao2023breinsurance, cao2023areinsurance} and is considered in Section \ref{sub:lam_zero}.

As stated above, the insurer, as the follower in each Stackelberg contracting game, seeks an optimal reinsurance strategy in response to the reinsurers' chosen premiums. This is formally defined below.

\begin{definition}[Insurer's Problem]
	\label{def:insurer}
	Let the reinsurers' premium strategies $\theta = (\theta_1, \theta_2) \in \Uc_1 \times \Uc_2$ be arbitrary but fixed, and $(t, x) \in [0, T] \times \Rb$. The insurer seeks an optimal reinsurance strategy {$\pb^\theta = (\pb_1^\theta, \pb_2^\theta)$} that maximizes its dynamic objective $J_0^\theta$ defined by 
	\begin{align}
		\label{eq:J_0}
		J_0^\theta(t, x; p) = \Eb[U_0(X_0(T)) | X_0(t) = x],
	\end{align}
in which 
$U_0$ is given by \eqref{eq:U} with $\delta_0 > 0$. 
Denote the insurer's value function by 
\begin{align}
	\label{eq:Vb_0}
	\Vb_0^\theta(t,x) = J_0^\theta(t,x; {\pb^\theta}) = \sup_{p \in \Uc_0} \, J_0^\theta(t,x; p).
\end{align}
\end{definition}

In the Stackelberg contracting game between Reinsurer $i$ and the insurer, Reinsurer $i$ is the game leader and chooses its premium loading $\theta_i$, $i=1,2$. Given the hierarchical structure of Stackelberg game, the surplus dynamics of Reinsurer $i$ in \eqref{eq:X_i} holds under the insurer's optimal reinsurance strategy ${\pb^\theta}$ defined in \eqref{eq:Vb_0}; to account for such a fact, we introduce a more precise notation $X_i^{{\pb^\theta}}$ for Reinsurer $i$'s surplus process given $p = {\pb^\theta}$. As each reinsurer compares its own surplus with the competitor's, we define the relative performance $Y_i = \{Y_i(t), t \in [0, T]\}$ of Reinsurer $i$ by 
\begin{align}
	\label{eq:Y}
	Y_i(t) = X_i^{{\pb^\theta}}(t) - \lambda_i \, X_j^{{\pb^\theta}}(t), 
\end{align}
for $i, j = 1, 2$ and $i \neq j$, in which $\lambda_i > 0$ measures the competition degree of Reinsurer $i$ to its competitor Reinsurer $j$. Note that Reinsurer $i$'s relative performance, $Y_i$ in \eqref{eq:Y}, depends on not only its own premium strategy $\theta_i$ but also its competitor's $\theta_j$.
\begin{definition}
	[Reinsurers' Problem]
	\label{def:reinsurer}
	Let $i, j = 1,2$ with $i \neq j$,  $\theta_j \in \Uc_j$, and $(t, y) \in [0,T] \times \Rb$ be given. Reinsurer $i$ seeks an optimal premium strategy ${\thb_i^{\theta_j}}$ that maximizes its dynamic objective $J_i^{\theta_j}$ defined by 
	\begin{align}
		\label{eq:J_i}
		J_i^{\theta_j}(t, y; \theta_i) = \Eb \left[U_i(Y_i(T)) | Y_i(t) = y \right],
	\end{align}
in which $U_i$ is given by \eqref{eq:U} with $\delta_i > 0$, and $Y_i$ by \eqref{eq:Y}. Denote the value function by 
\begin{align}
	\label{eq:Vb_i}
	\Vb_i^{\theta_j}(t, y) = J_i^{\theta_j}(t, y; {\thb_i^{\theta_j}}) = \sup_{\theta_i \in \Uc_i} \, J_i^{\theta_j}(t, y; \theta_i)
\end{align}
for $i, j = 1,2$ and $i \neq j$.
\end{definition}

As is clear from Definition \ref{def:reinsurer}, Reinsurer $i$'s optimal strategy ${\thb_i^{\theta_j}}$ depends on its competitor's strategy $\theta_j$. The competition between them is settled by a noncooperative Nash game, which leads to the following definition of the two-layer game equilibrium.

\begin{definition}
	[Equilibrium]
	\label{def:eq}
	Assume that there exists a fixed point, denoted by $\theta^* :=(\theta_1^*, \theta_2^*)$, to the mapping $(\theta_1, \theta_2) \mapsto ({\thb_1^{\theta_2}}, {\thb_2^{\theta_1}})$, and it is admissible (i.e., $\theta^* \in \Uc_1 \times \Uc_2$). The equilibrium of the two-layer reinsurance contracting and competition game plotted in Figure \ref{fig:tree} consists of the following controls: 
	\begin{itemize}
		\item $\theta_1^* = {\thb_1^{\theta_2^*}}$ and $\theta_2^* = {\thb_2^{\theta_1^*}}$ are the equilibrium premium strategies of Reinsurer 1 and Reinsurer 2, respectively, and $\theta^* =(\theta_1^*, \theta_2^*)$ forms the Nash equilibrium of the competition game between the two reinsurers; 
		
		\item $p^* := (p_1^*, p_2^*) = ({\pb_1^{\theta^*}}, {\pb_2^{\theta^*}})$ is the insurer's equilibrium reinsurance strategy in the Stackelberg games with Reinsurers 1 and 2.
	\end{itemize}
\end{definition}

\section{Equilibrium}
\label{sec:eq}

In this section, we first solve the insurer's problem, for a given pair of admissible premiums $\theta = (\theta_1, \theta_2) \in \Uc_1 \times \Uc_2$, to obtain its optimal strategy $\pb^\theta$ in Section \ref{sub:insurer}; next, assuming that the insurer follows $\pb^\theta$, we solve each reinsurer's problem in Section \ref{sub:reinsurer}; finally, we combine the above results to obtain the equilibrium of the two-layer game model in Section \ref{sub:eq}.

\subsection{Insurer's Problem}
\label{sub:insurer} 

Recall that the insurer's problem is formally defined in Definition \ref{def:insurer}. 
To solve the insurer's problem in \eqref{eq:Vb_0}, we apply the dynamic programming method (see \cite{fleming2006controlled} and \cite{yong2012stochastic}) and present the solution in the theorem below.

\begin{theorem}
	\label{thm:insurer}
	For every $\theta=(\theta_1, \theta_2) \in \Uc_1 \times \Uc_2$, the insurer's optimal reinsurance strategy $\pb^\theta= \{(\pb_1^\theta (t), \pb_2^\theta (t)), t \in [0, T]\}$ is given by 
	\begin{align}
		\label{eq:pb_1}
		\pb_1^\theta(t) &=\dfrac{ \delta_0\theta_2(t)} {\delta_0 \theta_1(t) + \delta_0\theta_2(t) + 2 \theta_1(t) \theta_2(t) }, \\
		\pb_2^\theta(t) &=\dfrac{\delta_0 \theta_1(t)}{ \delta_0 \theta_1(t) + \delta_0\theta_2(t) + 2 \theta_1(t) \theta_2(t) }, \label{eq:pb_2}
	\end{align}
and its value function $\Vb_0^\theta$, defined in \eqref{eq:Vb_0}, is given by 
\begin{align}
	\label{eq:Vb0_sol}
	\Vb_0^\theta(t, x) = - \frac{1}{\delta_0} \, \ee^{-\delta_0 \, x + f_0(t)},
\end{align}
for every $(t,x) \in [0,T] \times \Rb$, 
in which $\delta_0 > 0$ is the insurer's absolute risk aversion, and $f_0$ is given by 
\begin{align}
 	f_0(t) = \delta_0\int_t^T \bigg\{   \mu-c+ \dfrac{ \delta_0\sigma^2\theta_1(s)\theta_2(s)} {\delta_0 \theta_1(s) + \delta_0\theta_2(s) + 2 \theta_1(s) \theta_2(s) } \bigg\}\dd s. \quad \label{eq:f0}
 \end{align}
\end{theorem}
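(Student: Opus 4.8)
The plan is to solve the insurer's stochastic control problem \eqref{eq:Vb_0} by the dynamic programming principle and then to confirm optimality through a verification argument. First I would write down the Hamilton--Jacobi--Bellman (HJB) equation: writing $V=V(t,x)$ for a candidate of $\Vb_0^\theta$ and using the dynamics \eqref{eq:X_0}, whose drift and diffusion coefficient $\sigma(1-p_1-p_2)$ do not depend on $x$, the HJB equation is
\begin{align*}
	\partial_t V + \sup_{(p_1,p_2)\in\Uc_0}\Big\{&\big(c-\mu-\theta_1(t)\sigma^2 p_1^2-\theta_2(t)\sigma^2 p_2^2\big)\,\partial_x V \\
	&{}+ \tfrac12\,\sigma^2\big(1-p_1-p_2\big)^2\,\partial_{xx}V\Big\}=0,
\end{align*}
with terminal condition $V(T,x)=U_0(x)=-\frac{1}{\delta_0}\ee^{-\delta_0 x}$. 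Motivated by the CARA form of $U_0$ and the $x$-homogeneity of the coefficients, I would try the separable ansatz $V(t,x)=-\frac{1}{\delta_0}\ee^{-\delta_0 x+f_0(t)}$ with $f_0(T)=0$. Since then $\partial_x V=-\delta_0 V$, $\partial_{xx}V=\delta_0^2 V$ (so $V$ is strictly concave in $x$) and $\partial_t V=f_0'(t)V$, the common factor $V$ ($<0$) comes out of every term; this turns the supremum into the minimization of the strictly convex quadratic $g(p_1,p_2):=\delta_0\sigma^2\big(\theta_1(t)p_1^2+\theta_2(t)p_2^2\big)+\tfrac12\delta_0^2\sigma^2(1-p_1-p_2)^2$ over the admissible simplex, and reduces the PDE to the scalar ODE $f_0'(t)=-\delta_0(\mu-c)-\min g$.

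Second, I would carry out this static minimization pointwise in $t$. The first-order conditions read $2\theta_1 p_1=2\theta_2 p_2=\delta_0(1-p_1-p_2)$, and solving this linear system gives the unique stationary point \eqref{eq:pb_1}--\eqref{eq:pb_2}. Because $\delta_0>0$ and $\theta_1(t),\theta_2(t)>0$ (admissibility of $\theta$), this point has strictly positive components and satisfies $\pb_1^\theta(t)+\pb_2^\theta(t)=\frac{\delta_0(\theta_1(t)+\theta_2(t))}{\delta_0(\theta_1(t)+\theta_2(t))+2\theta_1(t)\theta_2(t)}<1$, hence it is interior to the admissible simplex (so there is no boundary case to check) and, being a bounded deterministic function of $t$ (as $\theta_1,\theta_2$ are), it indeed belongs to $\Uc_0$. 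Substituting $\pb^\theta$ back and simplifying gives $\min g=g(\pb^\theta(t))=\frac{\delta_0^2\sigma^2\theta_1(t)\theta_2(t)}{\delta_0\theta_1(t)+\delta_0\theta_2(t)+2\theta_1(t)\theta_2(t)}$; integrating the ODE $f_0'(t)=-\delta_0(\mu-c)-g(\pb^\theta(t))$ with $f_0(T)=0$ then yields exactly \eqref{eq:f0}, which in turn gives \eqref{eq:Vb0_sol}.

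Third, I would close the argument with a verification theorem. For an arbitrary admissible $p\in\Uc_0$, apply It\^o's formula to $s\mapsto V(s,X_0(s))$ on $[t,T]$: by the HJB equation the drift term is $\le 0$ for every $p$ and $=0$ when $p=\pb^\theta$, so, provided the stochastic-integral term is a true martingale, taking expectations gives $V(t,x)\ge J_0^\theta(t,x;p)$ for all admissible $p$ and $V(t,x)= J_0^\theta(t,x;\pb^\theta)$, i.e., $V=\Vb_0^\theta$ and $\pb^\theta$ is optimal. The step that needs genuine care is precisely this verification --- establishing the exponential integrability and uniform integrability needed to upgrade the local martingale to a martingale and to pass limits under the expectation for the unbounded CARA payoff. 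The helpful structural feature is that every admissible strategy $p$ is \emph{deterministic}, so $X_0$ is a Gaussian process with deterministic mean and bounded variance on $[0,T]$; consequently $\Eb[\ee^{-\delta_0 X_0(s)}]$ and the relevant moments are finite and controlled uniformly on $[t,T]$, which makes these estimates routine. (One could alternatively bypass the HJB entirely: since $p$ is deterministic, $X_0(T)$ is Gaussian, $J_0^\theta(0,x_0;p)=-\frac1{\delta_0}\exp\{-\delta_0\Eb[X_0(T)]+\tfrac12\delta_0^2\Var[X_0(T)]\}$, and one minimizes this deterministic functional; but the dynamic programming route is cleaner for producing $\Vb_0^\theta(t,x)$ at all $(t,x)$.)
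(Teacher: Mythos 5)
Your proposal is correct and follows essentially the same route as the paper: dynamic programming with the CARA exponential ansatz $V(t,x)=-\frac{1}{\delta_0}\ee^{-\delta_0 x+f_0(t)}$, reduction to a pointwise strictly convex quadratic minimization in $(p_1,p_2)$, solving the first-order conditions (your form $2\theta_1 p_1=2\theta_2 p_2=\delta_0(1-p_1-p_2)$ is algebraically equivalent to the paper's $\pb_i=\frac{\delta_0}{\delta_0+2\theta_i}(1-\pb_j)$), integrating the resulting ODE for $f_0$, and closing with a verification theorem that exploits the deterministic (hence uniformly bounded) strategies to establish the required exponential integrability. The paper's verification is spelled out in its Appendix (Theorem \ref{thm:veri_in}) via exactly the Gaussian moment bounds you sketch.
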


\begin{proof}
	From the dynamics equation of $X_0$ in \eqref{eq:X_0}, we define the infinitesimal generator $\Ac_0^{(p, \theta)} \, v$, for every $v \in \mathrm{C}^{1,2}([0,T] \times \Rb)$ and every $(p, \theta) \in \Uc$, by 
	\begin{align}
	\Ac_0^{p, \theta} \, v(t,x) &= 	\partial_t v(t,x)+ \left(c-\mu- \theta_1(t)p_1^2(t)\sigma^2-\theta_2(t)p_2^2(t)\sigma^2 \right) \partial_x v(t,x)\\
	&\quad +\frac{1}{2} \left(1 - p_1(t) - p_2(t)\right)^2 \sigma^2 \partial_{xx} v(t,x),
	\end{align}
in which $\partial_{\cdot} v$ denotes the corresponding partial derivative of $v$ with respect to the subscript argument. Then, applying the dynamic programming principle, we obtain that the insurer's value function $\Vb_0^\theta$, if $\Vb_0^\theta \in \mathrm{C}^{1,2}([0,T] \times \Rb)$, is a classical solution to the following Hamilton-Jacobi-Bellman (HJB) equation
\begin{equation}
	\label{HJB-x0}
	\begin{cases}
		\sup\limits_{p \in \mathcal{U}_0} \, \mathcal{A}_0^{(p, \theta)} \, v(t,x)=0, \\
		v(T,x) = -\frac{1}{\delta_0}\exp(-\delta_0 x).
	\end{cases}
\end{equation}

To solve \eqref{HJB-x0}, we consider an ansatz in the form of \eqref{eq:Vb0_sol}, in which $f_0 \in \mathrm{C}^1([0,T])$ is yet to be determined. With the help of this particular ansatz, we reduce \eqref{HJB-x0} into 
\begin{align}
	\label{f0t}
	\begin{cases}
	\inf\limits_{p \in \Uc_0} \, \left\{\frac{1}{2} \sigma^2\delta_0^2 (1\!-\!p_1(t)\!-\!p_2(t))^2 \!-\! \delta_0 \left(c \!-\! \mu \!-\! \theta_1(t)p_1^2(t)\sigma^2\!-\!\theta_2(t)p_2^2(t)\sigma^2 \right)  \right\} 
	 \!+\! f_0'(t)\! = \!0, \\
		f_0(T) = 0.
	\end{cases}\quad 
\end{align}

A straightforward calculus shows that the minimization problem in \eqref{f0t} admits a unique minimizer $(\pb_1^\theta(t), \pb_2^\theta(t))$ given by 
\begin{align} 
	\pb_1^\theta(t)  =\frac{\delta_0}{\delta_0 + 2\theta_1(t)}(1- \pb_2^\theta(t)) \quad \text{and} \quad
	\pb_2^\theta(t)  =\frac{\delta_0}{\delta_0 + 2\theta_2(t)}(1- \pb_1^\theta(t)).
\end{align}
Solving the above system of equations immediately leads to the solutions in \eqref{eq:pb_1}-\eqref{eq:pb_2}. Note that for $\pb_1^\theta$ in \eqref{eq:pb_1} and $\pb_2^\theta$ in \eqref{eq:pb_2}, Condition $(ii)$ in the definition of admissible reinsurance strategies $p \in \Uc_0$ is satisfied; as a result, given $\theta \in \Uc_1 \times \Uc_2$, we have $\pb^\theta = (\pb_1^\theta, \pb_2^\theta) \in \Uc_0$. 

Plugging $\pb_1^\theta$ in \eqref{eq:pb_1} and $\pb_2^\theta$ in \eqref{eq:pb_2} into \eqref{f0t}, we easily solve the ordinary differential equation (ODE) of $f_0$ and obtain the solution as in \eqref{eq:f0}; 
note that $f_0 \in \mathrm{C}^1([0,T])$ as desired. Last, applying the standard verification arguments (see Theorem \ref{thm:veri_in}) confirms that all the results in Theorem \ref{thm:insurer} hold. 
\end{proof}

\begin{remark}
Theorem \ref{thm:insurer} shows that, even $\theta_1(t) < \theta_2(t)$ for all $t$ (i.e., reinsurance contracts offered by Reinsurer 1 are strictly cheaper than those offered by Reinsurer 2), the insurer may still purchase reinsurance from \emph{both} reinsurers at the same time. Such a result may seem puzzling at first look because intuition suggests that for the same ``product'', one should always buy from the cheaper supplier. We comment that this intuition is indeed correct if both reinsurers apply the expected value premium principle (see p.934 Section 2 in \cite{cao2023areinsurance}). However, this intuition does \emph{not} hold when both apply the variance premium principle, as is the case in this paper. This is because even with $\theta_1 < \theta_2$, a \emph{combination} of proportional contracts from both reinsurers may outperform buying only from Reinsurer 1; we refer interested readers to \cite{yao2024optimal} for a nice example.
\end{remark}

\subsection{Reinsurers' Problem}
\label{sub:reinsurer}

This subsection aims to solve each reinsurer's problem in the Stackelberg game, assuming that the premium strategy from the other reinsurer is fixed (see Definition \ref{def:reinsurer}).
Recall that in each of the two Stackelberg contracting games, the insurer is the follower, and the reinsurer is the leader. As such, the definition of Reinsurer $i$'s relative performance (surplus) $Y_i$ in \eqref{eq:Y} indicates that the insurer follows its optimal strategy $\pb^\theta$ obtained in Theorem \ref{thm:insurer}. Given $\theta = (\theta_1, \theta_2) \in \Uc_1 \times \Uc_2$, the dynamics of $Y_i$ is governed by 
\begin{align}
	\label{eq:dY}
	\dd Y_i(t)=\sigma^2 \left(\theta_i(t) \left(\pb_i^\theta(t)\right)^2-\lambda_j\theta_j(t) \left(\pb_j^\theta(t)\right)^2\right)\dd t-\sigma \big( \pb_i^\theta(t) - \lambda_j \pb_j^\theta(t) \big)\dd W(t) \; 
 \quad 
\end{align}
for $i, j = 1,2$ with $i \neq j$.
We present the key results below.

\begin{theorem}
	\label{thm:reinsurer}
	Let $i, j = 1, 2$ with $i \neq j$ and fix $\theta_j \in \Uc_j$. The optimal premium strategy 
 {$\thb_i^{\theta_j}$}
 for Reinsurer $i$ is given by 
	\begin{align}
		\label{eq:theta_op}
            {\thb_i^{\theta_j}(t) = \varphi_i(\theta_j(t)) ,}
	\end{align}  
in which the function $\varphi_i$, $i=1,2$, is defined by 
\begin{align}
	\label{eq:Theta}
	{\varphi_i(x)} = 	\dfrac{(\delta_0+2\delta_i) \, x^2 +(1+\lambda_j)\delta_0\delta_i \, x} {2 x^2 + ((1+2\lambda_j)\delta_0+2\lambda_j\delta_i) \, x +\lambda_j(1+\lambda_j)\delta_0\delta_i} \, ,
\end{align}
and the corresponding value function is given by 
\begin{align}
	\label{eq:Vbi_sol}
	\Vb_i^{\theta_j}(t, y) = -\frac{1}{\delta_i} \, \ee^{-\delta_i y + f_i(t)}, \quad (t, y) \in [0,T] \times \Rb,
\end{align}
in which $f_i$, $i=1,2$, is defined by 
\begin{align}
	f_i(t)\!=\!\int_t^T\! \bigg\{&\!-\delta_i\sigma^2 \left(\thb_i^{\theta_j}(s) \, \pb_i^2(s) \!-\!\lambda_j \thb_j^{\theta_i}(s) \,  \pb_j^2(s) \right)\! \\
 &+ \!\frac{1}{2} \sigma^2\delta_i^2 \big(\pb_i(s)\!-\!\lambda_j \pb_j(s) \big)^2\! \bigg\}\dd s \label{eq:f_i}
\end{align}
with $\pb_1:= \pb_1^{\thb_1^{\theta_2}, \theta_2}$ from \eqref{eq:pb_1} and $\pb_2 := \pb_2^{\theta_1, \thb_2^{\theta_1}}$ from \eqref{eq:pb_2}.
\end{theorem}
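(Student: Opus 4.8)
The plan is to solve Reinsurer $i$'s problem \eqref{eq:Vb_i} by dynamic programming, in close parallel with Theorem \ref{thm:insurer}, the essential new feature being that the control $\theta_i$ now enters the state dynamics \eqref{eq:dY} only \emph{through} the insurer's best response $\pb^\theta=(\pb_1^\theta,\pb_2^\theta)$ from \eqref{eq:pb_1}--\eqref{eq:pb_2}. From \eqref{eq:dY}, for $v\in\mathrm{C}^{1,2}([0,T]\times\Rb)$ the generator of $Y_i$ under an admissible $\theta_i$ (with $\theta_j$ fixed) is
\begin{align*}
\Ac_i^{\theta_i}v(t,y)=\partial_t v+\sigma^2\big(\theta_i(t)(\pb_i^\theta(t))^2-\lambda_j\theta_j(t)(\pb_j^\theta(t))^2\big)\partial_y v+\tfrac12\sigma^2\big(\pb_i^\theta(t)-\lambda_j\pb_j^\theta(t)\big)^2\partial_{yy}v,
\end{align*}
and I would first argue (again via the dynamic programming principle) that $\Vb_i^{\theta_j}$, if smooth, solves $\sup_{\theta_i\in\Uc_i}\Ac_i^{\theta_i}v=0$ with $v(T,y)=-\tfrac1{\delta_i}\ee^{-\delta_i y}$. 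Postulating the exponential ansatz \eqref{eq:Vbi_sol}, i.e.\ $v(t,y)=-\tfrac1{\delta_i}\ee^{-\delta_i y+f_i(t)}$ with $f_i\in\mathrm{C}^1([0,T])$ and $f_i(T)=0$, and using $\partial_t v=f_i'v$, $\partial_y v=-\delta_i v$, $\partial_{yy}v=\delta_i^2v$ together with $v<0$ (so the supremum turns into an infimum after division by $v$), reduces the HJB equation to the scalar relation
\begin{align*}
f_i'(t)+\inf_{\theta_i(t)>0}\Big\{-\delta_i\sigma^2\big(\theta_i(t)(\pb_i^\theta(t))^2-\lambda_j\theta_j(t)(\pb_j^\theta(t))^2\big)+\tfrac12\sigma^2\delta_i^2\big(\pb_i^\theta(t)-\lambda_j\pb_j^\theta(t)\big)^2\Big\}=0 .
\end{align*}
The bracketed quantity depends only on $\theta_i(t)$, so the infimum is a pointwise-in-$t$ static optimization that does not involve $f_i$.

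The heart of the proof is this static optimization, which I expect to be the main obstacle. The brute-force route substitutes the rational functions \eqref{eq:pb_1}--\eqref{eq:pb_2} (with $\theta_j(t)$ as a parameter), differentiates the resulting rational function of $\theta_i(t)$, and simplifies the first-order condition. I would instead exploit the identities $2\theta_i\pb_i^\theta=2\theta_j\pb_j^\theta=\delta_0(1-\pb_1^\theta-\pb_2^\theta)$, which follow at once from the first-order conditions in the proof of Theorem \ref{thm:insurer}: writing $q:=1-\pb_1^\theta-\pb_2^\theta$ and $\rho:=\pb_i^\theta-\lambda_j\pb_j^\theta$ one gets $\theta_i(\pb_i^\theta)^2-\lambda_j\theta_j(\pb_j^\theta)^2=\tfrac{\delta_0}{2}q\rho$, so the bracketed objective becomes $\tfrac{\sigma^2}{2}\big(-\delta_i\delta_0\,q\rho+\delta_i^2\rho^2\big)$; moreover, with $\theta_j$ fixed, $q$ is an affine function of $\rho$ (one finds $\rho=1-cq$ with $c=1+(1+\lambda_j)\delta_0/(2\theta_j)>0$), so the objective is a strictly convex quadratic in $\rho$. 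Its vertex lies in the interior of the (bounded) range that $\rho$ sweeps out as $\theta_i(t)$ ranges over $(0,\infty)$, hence is the unique global minimizer, and unwinding the substitution gives $\thb_i^{\theta_j}(t)=\varphi_i(\theta_j(t))$ with $\varphi_i$ exactly as in \eqref{eq:Theta}. I would then check admissibility of $\thb_i^{\theta_j}$: it is deterministic, and since every coefficient in the numerator and denominator of $\varphi_i$ is positive, $\varphi_i$ maps $(0,\infty)$ into $(0,\infty)$ and extends continuously to $0$ with $\varphi_i(0)=0$, so the uniform boundedness of $\theta_j$ transfers to positivity and uniform boundedness of $\thb_i^{\theta_j}$; by Theorem \ref{thm:insurer} the induced insurer response $(\pb_1,\pb_2)$ is then admissible as well.

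Finally, substituting $\theta_i(t)=\thb_i^{\theta_j}(t)$ into the bracket turns the scalar relation above into the ODE $f_i'(t)=-\big(\text{optimal bracket value at }t\big)$, which integrates, with $f_i(T)=0$, to the expression \eqref{eq:f_i}; in particular $f_i\in\mathrm{C}^1([0,T])$, as the ansatz requires. To conclude I would invoke the standard verification argument (the exact analogue of Theorem \ref{thm:veri_in} used for the insurer): the candidate \eqref{eq:Vbi_sol} is $\mathrm{C}^{1,2}$, solves the HJB equation with the correct terminal condition, and $\thb_i^{\theta_j}$ is an admissible maximizer, whence $\Vb_i^{\theta_j}$ equals the candidate and $\thb_i^{\theta_j}$ is optimal for \eqref{eq:J_i}. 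Everything outside the closed-form simplification in the static step is routine once the ansatz is fixed.
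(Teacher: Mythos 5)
Your proposal is correct and follows essentially the same route as the paper: dynamic programming with the exponential ansatz, reduction of the HJB equation \eqref{HJB-yi} to a pointwise static optimization in $\theta_i(t)$, then the ODE for $f_i$ and a verification argument analogous to Theorem \ref{thm:veri_in}. The only place the paper differs is that it performs the static optimization by direct (unreported) calculus, whereas your substitution via $2\theta_i\pb_i^\theta=2\theta_j\pb_j^\theta=\delta_0 q$ and the affine relation $\rho=1-cq$ turns it into a strictly convex quadratic in $\rho$ whose interior vertex, when unwound, does reproduce $\varphi_i$ in \eqref{eq:Theta} exactly --- a cleaner execution of the step the paper hides behind ``a length calculus.''
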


\begin{proof}
	This proof largely follows from that of Theorem \ref{thm:insurer}, and for that reason, we only provide a sketch below to save space. To start, based on \eqref{eq:dY}, we define an operator $\Ac_i^{(\theta_i, \theta_j)} v $, for every $v \in \mathrm{C}^{1,2}([0,T] \times \Rb)$ and every $(\pb^\theta, \theta) \in \Uc$, by 
	\begin{align}
		\Ac_i^{(\theta_i, \theta_j)} \, v(t, y) &= 	\partial_t v(t,y)+ \sigma^2 \left(\theta_i(t) \left(\pb_i^{\theta}(t)\right)^2 -\lambda_j \theta_j(t) \left(\pb_j^{\theta}(t)\right)^2 \right) \partial_y v(t, y)\\
		&\quad +\frac{1}{2} \, \sigma^2 \left(\pb_i^{\theta}(t)- \lambda_j \pb_j^{\theta}(t) \right)^2  \, \partial_{yy} v(t, y),
	\end{align}
for $i = 1, 2$. It can be shown that the value function $\Vb_i^{\theta_j}$, assuming $\Vb_i^{\theta_j} \in \mathrm{C}^{1,2}([0,T] \times \Rb)$, satisfies the following HJB equation
\begin{equation}\label{HJB-yi}
	\left\{\begin{aligned}
		&\sup_{\theta_i \in \Uc_i} \mathcal{A}_i^{(\theta_i, \theta_j)} v(t,y)=0,\\
		&v(T,y)=-\frac{1}{\delta_i} \, \ee^{ -\delta_i y}.
	\end{aligned}\right.
\end{equation}
By a length calculus (results are available upon request), we prove that the optimization problem in \eqref{HJB-yi} admits a unique maximizer $\thb_i^{\theta_j}$, and it is given by \eqref{eq:theta_op}, which is positive and bounded for every $\theta_j \in \Uc_j$. We then use $\thb_i^{\theta_j}$ in \eqref{eq:theta_op} to reduce \eqref{HJB-yi} into an ODE of $f_i$, to which there exists a unique solution given by \eqref{eq:f_i}. Finally, by a standard verification argument (see Theorem \ref{thm:veri_in} for a similar result), we confirm that $\thb_i^{\theta_j}$ in \eqref{eq:theta_op} is the optimal premium strategy for Reinsurer $i$, and $\Vb_i^{\theta_j}$ in \eqref{eq:Vbi_sol} is the value function to Reinsurer $i$'s problem formulated in Definition \ref{def:reinsurer}.  
\end{proof}

\subsection{Equilibrium}
\label{sub:eq}

In the last step, we combine the results from Theorems \ref{thm:insurer} and \ref{thm:reinsurer} to obtain the equilibrium of the two-layer game model plotted in Figure \ref{fig:tree}.  Our findings are summarized in the next theorem.

\begin{theorem}
	\label{thm:eq}
	If $0 < \lambda_1 \lambda_2 < 1$, the bivariate mapping $(\theta_1, \theta_2) \mapsto (\varphi_1(\theta_2), \varphi_2(\theta_1))$, with $\varphi_1$ and $\varphi_2$ defined in \eqref{eq:Theta}, admits a unique  constant fixed point in $(0, +\infty)\times(0, +\infty)$, denoted by $(\tho, \tht)$, and the unique equilibrium consists of the following controls: 
	\begin{itemize}
		\item Reinsurer $i$'s equilibrium premium strategy is \emph{constant} and equals $\theta_i^*$ (i.e., $\theta_i(t) \equiv \theta_i^*$ for all $t \in [0,T]$), for $i=1,2$.
		
		\item The insurer's equilibrium reinsurance strategy $p^* = (p_1^*, p_2^*)$ is \emph{constant} and equals $(\pb_1^{\theta^*}, \pb_2^{\theta^*})$ in \eqref{eq:pb_1}-\eqref{eq:pb_2} with $(\theta_1(t), \theta_2(t)) \equiv (\tho, \tht)$.
	\end{itemize}
	If $\lambda_1 \lambda_2 \ge 1$, there does not exist an equilibrium. 
\end{theorem}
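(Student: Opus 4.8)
The plan is to collapse the two--dimensional fixed--point system to a one--dimensional equation and then settle it by a monotonicity argument. Write $\varphi_i = N_i/D_i$ with $N_i(x) = (\delta_0+2\delta_i)x^2 + (1+\lam_j)\delta_0\delta_i\,x$ and $D_i(x) = 2x^2 + ((1+2\lam_j)\delta_0+2\lam_j\delta_i)\,x + \lam_j(1+\lam_j)\delta_0\delta_i$, and note that here $\lam_1,\lam_2>0$ (the degenerate case $\lam_1\lam_2=0$ is treated in Section~\ref{sub:lam_zero}). First I would record, as a preliminary lemma, four properties of each $\varphi_i$ on $[0,\infty)$: (i) $\varphi_i$ is continuous with $\varphi_i(0)=0$; (ii) $\varphi_i$ is strictly increasing, because $N_i'D_i - N_iD_i'$ works out to $\big((1+2\lam_j)\delta_0^2 + 4\lam_j\delta_0\delta_i + 4\lam_j\delta_i^2\big)x^2 + 2(\delta_0+2\delta_i)\lam_j(1+\lam_j)\delta_0\delta_i\,x + \lam_j(1+\lam_j)^2\delta_0^2\delta_i^2$, a quadratic with all coefficients strictly positive; (iii) $\varphi_i(x)\uparrow \delta_0/2 + \delta_i$ as $x\to\infty$; and, most importantly, (iv) the ratio $h_i(x) := \varphi_i(x)/x$ is strictly decreasing on $(0,\infty)$, since the numerator of $h_i'$ simplifies to $-2(\delta_0+2\delta_i)x^2 - 4(1+\lam_j)\delta_0\delta_i\,x - (1+\lam_j)^2\delta_0^2\delta_i < 0$ on $[0,\infty)$. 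I would also note the endpoint values $h_i(0^+) = \varphi_i'(0) = N_i'(0)/D_i(0) = 1/\lam_j$ and $h_i(\infty) = 0$.

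Second, I would reduce the system. A vector $(\theta_1,\theta_2)\in(0,\infty)^2$ is a fixed point of $(\theta_1,\theta_2)\mapsto(\varphi_1(\theta_2),\varphi_2(\theta_1))$ if and only if $\theta_1$ is a fixed point of $\varphi_1\circ\varphi_2$ and $\theta_2=\varphi_2(\theta_1)$. Introducing $\psi(x):=\varphi_1(\varphi_2(x))/x$ on $(0,\infty)$, the condition on $\theta_1$ reads $\psi(\theta_1)=1$. The key observation is that $\psi(x) = h_1(\varphi_2(x))\cdot h_2(x)$ is a product of two strictly positive functions that are each strictly decreasing on $(0,\infty)$: $h_2$ by (iv), and $x\mapsto h_1(\varphi_2(x))$ because $h_1$ is decreasing by (iv) while $\varphi_2$ is increasing by (ii). Hence $\psi$ is strictly decreasing on $(0,\infty)$, with $\psi(0^+) = h_1(0^+)h_2(0^+) = 1/(\lam_1\lam_2)$ (using $\varphi_2(x)\to 0^+$ as $x\to 0^+$) and $\psi(\infty) = 0$ (since $h_2(\infty)=0$).

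Third, I would conclude and translate back to the game. If $0<\lam_1\lam_2<1$, then $\psi(0^+) = 1/(\lam_1\lam_2) > 1 > 0 = \psi(\infty)$, so by continuity and strict monotonicity there is a unique $\tho\in(0,\infty)$ with $\psi(\tho)=1$; setting $\tht:=\varphi_2(\tho)\in(0,\infty)$ gives $\varphi_1(\tht)=\varphi_1(\varphi_2(\tho))=\tho$, so $(\tho,\tht)$ is a fixed point, and it is the only one in $(0,\infty)^2$ because any such fixed point $(\theta_1,\theta_2)$ satisfies $\psi(\theta_1)=1$, forcing $\theta_1=\tho$ and then $\theta_2=\varphi_2(\tho)=\tht$. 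If $\lam_1\lam_2\ge 1$, then $\psi(x) < \psi(0^+) = 1/(\lam_1\lam_2) \le 1$ for every $x>0$, so $\varphi_1\circ\varphi_2$ has no fixed point in $(0,\infty)$ and the bivariate map has none in $(0,\infty)^2$. To pass to Definition~\ref{def:eq}: by Theorem~\ref{thm:reinsurer} the best response to any admissible $\theta_j$ is $\thb_i^{\theta_j}(t)=\varphi_i(\theta_j(t))$, so $\theta^*=(\theta_1^*,\theta_2^*)\in\Uc_1\times\Uc_2$ is an equilibrium iff $\theta_1^*(t)=\varphi_1(\theta_2^*(t))$ and $\theta_2^*(t)=\varphi_2(\theta_1^*(t))$ for all $t\in[0,T]$; since admissibility forces $\theta_i^*(t)>0$, at each $t$ the vector $(\theta_1^*(t),\theta_2^*(t))$ must be a fixed point of the bivariate map in $(0,\infty)^2$. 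When $0<\lam_1\lam_2<1$ this forces $\theta_1^*\equiv\tho$ and $\theta_2^*\equiv\tht$ (whence constancy and uniqueness), and the insurer's equilibrium strategy is $\pb^{\theta^*}$ from Theorem~\ref{thm:insurer} evaluated at the constant $\theta^*$, which by \eqref{eq:pb_1}--\eqref{eq:pb_2} is a constant pair in $\Uc_0$; conversely these constants do form an equilibrium since they solve the fixed-point system. When $\lam_1\lam_2\ge 1$ no such fixed point exists, hence no equilibrium.

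Finally, on where the difficulty lies: the conceptual skeleton above (reduction to $\psi=1$, the product-of-decreasing-functions remark, and the intermediate value theorem) is short, so the real work is the two sign computations in (ii) and (iv) --- verifying that the numerators of $\varphi_i'$ and of $(\varphi_i(x)/x)'$ are polynomials with the claimed constant sign on $[0,\infty)$. The decisive one is (iv): it is precisely the cancellation making the constant term of the numerator of $(\varphi_i/x)'$ equal to $-(1+\lam_j)^2\delta_0^2\delta_i < 0$ that upgrades the easy local statement near $0$ (namely $\psi(0^+)=1/(\lam_1\lam_2)$) into the global strict monotonicity of $\psi$, and hence yields both uniqueness and the sharp dichotomy at $\lam_1\lam_2=1$. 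I would isolate (i)--(iv) as a standalone lemma so the main proof reads cleanly.
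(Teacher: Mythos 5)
Your proposal is correct, and it reaches the conclusion by a genuinely different mechanism than the paper. The paper works with the pair of scalar equations $\varphi_2(\tho)=\varphi_1^{-1}(\tho)$, $\varphi_1(\tht)=\varphi_2^{-1}(\tht)$ and settles each by a concavity/convexity argument: it computes $\varphi_i''<0$, deduces that $\varphi_i^{-1}$ is convex, and concludes that a concave increasing curve through the origin meets a convex increasing curve through the origin at a unique positive point if and only if the slope condition $\lim_{x\to 0}\varphi_i'(x)=1/\lam_j>\lam_i=\lim_{x\to 0}(\varphi_j^{-1})'(x)$ holds, which is exactly $\lam_1\lam_2<1$. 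You instead never invert $\varphi_i$ or touch second derivatives: your key lemma is the star-shapedness property that $h_i(x)=\varphi_i(x)/x$ is strictly decreasing (I checked the algebra; the numerator of $h_i'$ is indeed $-2(\delta_0+2\delta_i)x^2-4(1+\lam_j)\delta_0\delta_i x-(1+\lam_j)^2\delta_0^2\delta_i<0$, and $h_i(0^+)=1/\lam_j$), which composes into the single strictly decreasing function $\psi=\bigl(h_1\circ\varphi_2\bigr)\cdot h_2$ with $\psi(0^+)=1/(\lam_1\lam_2)$ and $\psi(\infty)=0$, so both existence/uniqueness and the sharp non-existence at $\lam_1\lam_2\ge 1$ drop out of the intermediate value theorem in one stroke. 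What your route buys is a fully explicit one-dimensional reduction and a cleaner proof of the dichotomy (the paper's "either has no solution or admits a unique, positive solution" step is left somewhat implicit); what the paper's route buys is that the concavity/convexity facts it establishes are reused verbatim in Section 3.4 and in the proofs of Corollaries 3.4--3.5 (e.g., inequality \eqref{eq:deri_cond} feeds directly into the sensitivity computations), so if you adopt your argument you should still record \eqref{eq:derivative2}--\eqref{eq:deri_cond} or re-derive the bound $\varphi_1'(\tht)\varphi_2'(\tho)<1$ from $\psi'(\tho)<0$, since the later results depend on it. The final translation to Definition \ref{def:eq} via Theorems \ref{thm:insurer} and \ref{thm:reinsurer} matches the paper's (brief) treatment.
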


\begin{proof}
	Based on the equilibrium definition (see Definition \ref{def:eq}) and Reinsurer $i$'s optimal strategy $\thb_i^{\theta_j} = \varphi_i(\theta_j)$ in \eqref{eq:theta_op}, the key is to find a fixed point for the bivariate mapping $(\theta_1, \theta_2) \mapsto (\varphi_1(\theta_2), \varphi_2(\theta_1))$, which, if exists, forms the pair of equilibrium premium strategies for the two reinsurers. 
	
	We start with analyzing the individual functions $\varphi_1$ and $\varphi_2$ defined by \eqref{eq:Theta}. We compute their derivatives as follows:
	\begin{align}
		\varphi'_i(x) &= \frac{(4\delta_0\delta_i\lambda_j+4\delta_i^2\lambda_j +2\lambda_j\delta_0^2+\delta_0^2)x^2 +2\delta_0\delta_i(\delta_0+2\delta_i)\lambda_j(1+\lambda_j)x}{\left(2x^2+((1+2\lambda_j)\delta_0+2\lambda_j\delta_i)x+\lambda_j(1+\lambda_j)\delta_0\delta_i\right)^2}  \\
		&\quad + \frac{\delta_0^2\delta_i^2\lambda_j(1+\lambda_j)^2}{\left(2x^2+((1+2\lambda_j)\delta_0+2\lambda_j\delta_i)x+\lambda_j(1+\lambda_j)\delta_0\delta_i\right)^2} > 0, \\
	\text{and} \quad 		\varphi''_i(x) &= -\frac{(16(\delta_0\!+\!\delta_i)\delta_i\lambda_j\!+\!4\delta_0^2(1\!+\!2\lambda_j))x^3\!+\!12\delta_0\delta_i(\delta_0\!+\!2\delta_i)\lambda_j(1\!+\!2\lambda_j)x^2\!}{\left(2x^2+((1+2\lambda_j)\delta_0+2\lambda_j\delta_i)x+\lambda_j(1+\lambda_j)\delta_0\delta_i\right)^3} \\
			&\quad - \frac{\!12\delta_0^2\delta_i^2\lambda_j(1\!+\!\lambda_j)^2x\!+\!2\delta_0^3\delta_i^2\lambda_j(1\!+\!\lambda_j)^3 }{\left(2x^2+((1+2\lambda_j)\delta_0+2\lambda_j\delta_i)x+\lambda_j(1+\lambda_j)\delta_0\delta_i\right)^3} <0 
	\end{align}
for $i = 1,2$ with $i \neq j$. We also have
\begin{align}
    \lim_{x \to 0} \, \varphi_i(x) = 0, \quad i = 1, 2.
\end{align}
Therefore, the inverse function of $\varphi_i$ exists over $(0,+\infty)$; let us denote it by $\varphi_i^{-1}$ hereafter. It is easy to see that, for $x>0$,
\begin{align}\label{eq:derivative2}
	\left(\varphi_i^{-1}\right)'(x) > 0 \quad \text{and} \quad  \left(\varphi_i^{-1}\right)''(x) > 0, \quad i = 1, 2.
\end{align}
Using the above derivative results on $\varphi_i$ and \eqref{eq:Theta}, we have the following limit results:
\begin{align}\label{eq:limit}
			&\lim_{x \to 0} \, \varphi_i'(x) = \frac{1}{\lambda_j}, 
	& &\lim_{x \to +\infty} \, \varphi_i'(x) = 0, \\&\lim_{x \to 0} \, \varphi_i^{-1}(x) = 0,   & & \lim_{x \to 0} \,\left(\varphi_i^{-1}\right)'(x) = \lambda_j,& &\lim_{x \to +\infty} \,\left(\varphi_i^{-1}\right)'(x) =  +\infty.  
\end{align}

Assume for now that $(\theta_1, \theta_2) \mapsto (\varphi_1(\theta_2), \varphi_2(\theta_1))$ has a fixed point, and we denote it by $(\tho, \tht)$. Then, by definition, $(\tho, \tht)$ is a solution to the following system of equations:
\begin{align}
	\label{eq:fixed_point}
	\varphi_2(\tho) = \varphi_1^{-1}(\tho) \quad \text{and} \quad \varphi_1(\tht) = \varphi_2^{-1}(\tht)
\end{align}
Given the properties of $\varphi_i$ and $\varphi_i^{-1}$, the system \eqref{eq:fixed_point} either has no solution or admits a unique, positive solution. Furthermore, the sufficient and necessary condition for the existence of a unique solution is 
\begin{align}
	\label{eq:deri_cond}
	\lim_{x \to 0} \, \varphi_1'(x) > \lim_{x \to 0} \, \left(\varphi_2^{-1}\right)'(x) \quad \text{and} \quad \lim_{x \to 0} \, \varphi_2'(x) > \lim_{x \to 0} \, \left(\varphi_1^{-1}\right)'(x), 
\end{align}
which is equivalent to  $\lambda_1 \lambda_2 < 1$ by recalling \eqref{eq:limit}.

The rest of the claims are evident by Theorems \ref{thm:insurer} and \ref{thm:reinsurer}.
\end{proof}

Recall that the quantity Reinsurer $i$ optimizes is \emph{not} its own terminal performance (surplus) $X_i(T)$ but rather the \emph{relative} performance $Y_i(T)$ defined in \eqref{eq:Y}. Although $\lambda_i \ge 1$ is allowed mathematically in the analysis, the term ``relative'' suggests that a more reasonable condition is that $0 < \lambda_i < 1$, under which a unique equilibrium exists and is characterized as in Theorem \ref{thm:eq}. As such, we assume $0 < \lambda_1 \lambda_2 < 1$ holds in the rest of the paper. 
Given such an assumption, Theorem \ref{thm:eq} provides a complete characterization of the equilibrium in semi-closed form, subject to solving a nonlinear system \eqref{eq:fixed_point} to obtain $\theta^*$. As there does not exist a closed-form solution for $\theta^*$, we focus on the sensitivity analysis of the equilibrium controls in the rest of this section. 

Recall from the definition of $\varphi_i$ in \eqref{eq:Theta} that both $\varphi_1$ and $\varphi_2$, and thus the equilibrium premium strategy $\theta^*$, depend on two sets of model parameters: (1) risk aversions ($\delta_0, \delta_1, \delta_2$) and (2) competition degrees ($\lambda_1$ and $\lambda_2$).
We present an analytical result regarding the impact of all these parameters on the equilibrium premium strategy $\theta^*$. 
The sensitivity results of $\theta_i^*$ in \eqref{eq:theta_sen} are consistent with our intuition and will be explained in detail in the next section, in which we also study how these model parameters affect the insurer's equilibrium reinsurance strategy $p^*$, for which an analytical result is not available.

\begin{corollary}
	\label{prop:sen}
	Suppose $0 < \lam_1 \lam_2 < 1$, and let $\theta^*_i$ be the equilibrium premium strategy of Reinsurer $i$, $i=1,2$, as obtained in Theorem \ref{thm:eq}. We have the following sensitivity results for the equilibrium premium strategy:
	\begin{align}
		\label{eq:theta_sen}
		\frac{\partial\theta_i^*}{\partial\delta_0}>0,\quad \frac{\partial\theta_i^*}{\partial\delta_i}>0,\quad \frac{\partial\theta_i^*}{\partial\delta_j}>0,\quad
		\frac{\partial\theta_i^*}{\partial\lambda_i}<0,\quad	\frac{\partial\theta_i^*}{\partial\lambda_j}<0
	\end{align}
	for $i , j = 1, 2$ with $i \neq j$. However, there does not exist any monotonicity result on the insurer's equilibrium reinsurance strategy $p^*$.
\end{corollary}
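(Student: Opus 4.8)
The plan is to extract the five signs in \eqref{eq:theta_sen} by implicit differentiation of the two equations that pin down the equilibrium premiums, $\theta_1^* = \varphi_1(\theta_2^*)$ and $\theta_2^* = \varphi_2(\theta_1^*)$, where $\varphi_i$ is the rational function in \eqref{eq:Theta}. A useful structural remark is that $\varphi_i$ involves only the parameters $\delta_0$, $\delta_i$, and $\lambda_j$, but not $\delta_j$ or $\lambda_i$. Fix a parameter $\alpha \in \{\delta_0, \delta_1, \delta_2, \lambda_1, \lambda_2\}$; denoting by $\varphi_i'$ the derivative in the argument $x$ and by $\partial_\alpha \varphi_i$ the partial in $\alpha$, differentiating the two fixed-point equations in $\alpha$ gives a $2 \times 2$ linear system for $(\partial_\alpha \theta_1^*, \partial_\alpha \theta_2^*)$ whose coefficient matrix has diagonal entries $1$ and off-diagonal entries $-\varphi_1'(\theta_2^*)$ and $-\varphi_2'(\theta_1^*)$, hence determinant $\Delta := 1 - \varphi_1'(\theta_2^*)\, \varphi_2'(\theta_1^*)$. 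Solving,
\[
\partial_\alpha \theta_1^* = \frac{\partial_\alpha \varphi_1(\theta_2^*) + \varphi_1'(\theta_2^*)\, \partial_\alpha \varphi_2(\theta_1^*)}{\Delta},
\]
with the analogous expression for $\partial_\alpha \theta_2^*$ obtained by swapping the indices $1$ and $2$. Since $\varphi_i' > 0$ (shown in the proof of Theorem \ref{thm:eq}), as soon as $\Delta > 0$ each $\partial_\alpha \theta_i^*$ is a combination of $\partial_\alpha \varphi_1$ and $\partial_\alpha \varphi_2$ with positive coefficients, so its sign is read off from theirs.

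The step I expect to be the real obstacle is establishing $\Delta > 0$: this is the invertibility hypothesis in the implicit function theorem (which also furnishes the $C^1$-dependence of $\theta^*$ on the parameters that the above differentiation presupposes), and it fixes the sign of the denominator. It is not a priori clear, because $\varphi_i'$ can be as large as $\varphi_i'(0^+) = 1/\lambda_j > 1$; but it follows from the transversality already visible in the proof of Theorem \ref{thm:eq}. Indeed, the fixed point is the intersection of the strictly concave curve $\theta_1 \mapsto \varphi_2(\theta_1)$ with the strictly convex curve $\theta_1 \mapsto \varphi_1^{-1}(\theta_1)$, both through the origin, the former lying strictly above the latter on $(0, \theta_1^*)$ exactly because $\lambda_1 \lambda_2 < 1$. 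The difference $\varphi_2 - \varphi_1^{-1}$ is then strictly concave, vanishes at $0$ and at $\theta_1^* > 0$, and is positive in between, so its derivative at $\theta_1^*$ is strictly negative; that is, $\varphi_2'(\theta_1^*) < (\varphi_1^{-1})'(\theta_1^*) = 1/\varphi_1'(\varphi_1^{-1}(\theta_1^*)) = 1/\varphi_1'(\theta_2^*)$, i.e., $\varphi_1'(\theta_2^*)\,\varphi_2'(\theta_1^*) < 1$. (One also notes that $\theta^*$ stays admissible under small perturbations, since $0 < \varphi_i < (\delta_0 + 2\delta_i)/2$ keeps $\theta_i^*$ positive and bounded.)

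It remains to determine the sign of $\partial_\alpha \varphi_i(x)$ for $x > 0$. Writing $\varphi_i = N_i / D_i$ with $N_i, D_i$ the numerator and denominator in \eqref{eq:Theta}, so that $\partial_\alpha \varphi_i = (\partial_\alpha N_i\, D_i - N_i\, \partial_\alpha D_i)/D_i^2$, a direct expansion collapses to the clean identities
\begin{align*}
	\partial_{\delta_0} \varphi_i(x) &= \frac{2 x^4}{D_i(x)^2} > 0, \qquad \partial_{\delta_i} \varphi_i(x) = \frac{x^2 \big(2x + (1+\lambda_j)\delta_0\big)^2}{D_i(x)^2} > 0, \\
	\partial_{\lambda_j} \varphi_i(x) &= -\,\frac{2(\delta_0^2 + 3\delta_0 \delta_i + 4\delta_i^2)\, x^3 + 2(1+\lambda_j)\delta_0\delta_i(\delta_0 + 2\delta_i)\, x^2 + (1+\lambda_j)^2 \delta_0^2 \delta_i^2\, x}{D_i(x)^2} < 0,
\end{align*}
while $\partial_{\delta_j} \varphi_i \equiv 0$ and $\partial_{\lambda_i} \varphi_i \equiv 0$ by the structural remark. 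Plugging these into the formula of the first paragraph (with $x = \theta_j^*$ or $\theta_i^*$, both positive, and the mixed partials vanishing) yields in turn $\partial_{\delta_0}\theta_i^* > 0$, $\partial_{\delta_i}\theta_i^* > 0$, $\partial_{\delta_j}\theta_i^* > 0$, $\partial_{\lambda_i}\theta_i^* < 0$, and $\partial_{\lambda_j}\theta_i^* < 0$, which is exactly \eqref{eq:theta_sen}. The only laborious part is verifying the three displayed identities, but this is purely mechanical; as a check, $x \to \infty$ gives $\partial_{\delta_0}\varphi_i \to \tfrac12$ and $\partial_{\delta_i}\varphi_i \to 1$, consistent with $\varphi_i(x) \to (\delta_0 + 2\delta_i)/2$.

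Finally, for the claim that no analogous monotonicity holds for the insurer's equilibrium reinsurance strategy $p^* = (\pb_1^{\theta^*}, \pb_2^{\theta^*})$ given by \eqref{eq:pb_1}--\eqref{eq:pb_2} at $\theta = \theta^*$: since $\theta^*$ has no closed form, neither does $p^*$, and the way to establish the claim is to exhibit explicit parameter configurations for which the sign of a partial derivative of $p^*$ (for example $\partial p_1^* / \partial \delta_1$) is not constant. Such configurations are provided by the numerical study in Section \ref{sec:econ}, to which the proof can simply refer.
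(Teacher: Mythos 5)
Your proposal is correct and follows essentially the same route as the paper's proof: implicit differentiation of the fixed-point system $\theta_1^*=\varphi_1(\theta_2^*)$, $\theta_2^*=\varphi_2(\theta_1^*)$, solving the resulting $2\times 2$ linear system, and reading off the signs from the parameter-partials of $\varphi_i$ together with $\varphi_i'>0$ and $1-\varphi_1'(\theta_2^*)\,\varphi_2'(\theta_1^*)>0$ (your concavity/transversality argument for the latter is actually more explicit than the paper's, which only cites \eqref{eq:derivative2}, \eqref{eq:fixed_point}, and \eqref{eq:deri_cond}; likewise your $\partial_{\delta_i}\varphi_i$ with the squared factor is the correct identity, consistent with the limit check). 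The only blemish is the $x^3$ coefficient in your $\partial_{\lambda_j}\varphi_i$, which should be $2(\delta_0^2+2\delta_0\delta_i+2\delta_i^2)$ rather than $2(\delta_0^2+3\delta_0\delta_i+4\delta_i^2)$ --- still negative, so the conclusion stands --- and your handling of the non-monotonicity of $p^*$ defers to numerics where the paper instead observes that the chain-rule decomposition of $\partial p_i^*/\partial\alpha$ mixes terms of opposite signs.
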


\begin{proof}
	See Appendix \ref{sec:app} for proof.
\end{proof}

{From Theorem \ref{thm:eq}, we know that equilibrium does \emph{not} exist when $\lam_1 \lam_2 = 1$, which includes the special case of $\lam_1 = \lam_2 = 1$, but a unique equilibrium exists when $\lam_1 \lam_2$ is \emph{strictly} less than 1. This drastically different ``boundary'' behavior motivates us to study the limit case of $\lam_1 \lam_2 \nearrow 1$; the results regarding each player's equilibrium strategy are obtained in Corollary \ref{cor:limit}. It is pleasing to see that, as $\lam_1 \lam_2$ increases to the boundary value 1, the equilibrium premium loading of both reinsurers reduces to 0. The economic meaning is that full competition ``forces'' both reinsurers to charge \emph{actuarially fair premium}, resulting in zero net (expected) profit, and the risk averse insurer benefits the most and buys full insurance, a classical result known from \cite{arrow1963uncertainty}. Corollary \ref{cor:limit} also helps explain why $\lam_1 \lam_2 = 1$ is the critical boundary for the existence of equilibrium shown in Theorem \ref{thm:eq}. Recall that both reinsurers are also risk averse; as $\lam_1 \lam_2 \nearrow 1$, their expected profit reduces to 0 (as $\theta_i^* \searrow 0$ for $i = 1, 2$), and they will simple walk away from the reinsurance business.
} 

\begin{corollary}
	\label{cor:limit}
Suppose $0 < \lam_1 \lam_2 < 1$. We have 
\begin{align}
	\lim_{\lam_1 \lam_2 \nearrow 1} \, \theta_1^* = \lim_{\lam_1 \lam_2 \nearrow 1} \, \theta_2^* = 0 \quad \text{and} \quad 
	\lim_{\lam_1 \lam_2 \nearrow 1} \, (p_1^* + p_2^*) = 1.
\end{align}
\end{corollary}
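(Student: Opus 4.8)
The plan is to exploit the fixed-point characterization from Theorem \ref{thm:eq} together with the limit facts on $\varphi_i$ and $\varphi_i^{-1}$ already collected in its proof. Recall that the equilibrium pair $(\theta_1^*, \theta_2^*)$ solves the system \eqref{eq:fixed_point}, namely $\varphi_2(\theta_1^*) = \varphi_1^{-1}(\theta_1^*)$ and $\varphi_1(\theta_2^*) = \varphi_2^{-1}(\theta_2^*)$, and that each $\varphi_i$ is strictly increasing, strictly concave, vanishes at $0$, with $\varphi_i'(0^+) = 1/\lambda_j$. First I would show that $\theta_i^* \to 0$ as $\lambda_1 \lambda_2 \nearrow 1$. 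The intuition is that the graphs of $\varphi_2$ and $\varphi_1^{-1}$ both pass through the origin with slopes $1/\lambda_1$ and $\lambda_2$ respectively; the positive intersection $\theta_1^*$ exists precisely because $1/\lambda_1 > \lambda_2$ (i.e. $\lambda_1\lambda_2 < 1$), and it is ``squeezed'' toward $0$ as these two slopes converge. To make this rigorous, I would derive an explicit upper bound for $\theta_1^*$: using concavity of $\varphi_2$ one has $\varphi_2(x) \le \varphi_2'(0^+) x = x/\lambda_1$, while using convexity of $\varphi_1^{-1}$ (established in \eqref{eq:derivative2}) one has $\varphi_1^{-1}(x) \ge (\varphi_1^{-1})'(0^+) x = \lambda_2 x$ — but equality in the latter only at $x=0$, so I actually need a second-order term. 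A cleaner route: expand both $\varphi_2$ and $\varphi_1^{-1}$ to second order near $0$ from the rational expression \eqref{eq:Theta}, obtaining $\varphi_2(x) = x/\lambda_1 - a_1 x^2 + o(x^2)$ and $\varphi_1^{-1}(x) = \lambda_2 x + b_1 x^2 + o(x^2)$ with $a_1, b_1 > 0$ coefficients depending continuously on the $\delta$'s and $\lambda$'s; equating and dividing by $x$ gives $\theta_1^* \approx (1/\lambda_1 - \lambda_2)/(a_1 + b_1)$, which tends to $0$ as $\lambda_1\lambda_2 \to 1$. The symmetric argument handles $\theta_2^*$.

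Once $\theta_i^* \to 0$ is in hand, the second claim follows from the insurer's optimal response formulas \eqref{eq:pb_1}–\eqref{eq:pb_2}. Summing them gives
\begin{align}
	p_1^* + p_2^* = \frac{\delta_0(\theta_1^* + \theta_2^*)}{\delta_0\theta_1^* + \delta_0\theta_2^* + 2\theta_1^*\theta_2^*} = \frac{1}{1 + \dfrac{2\theta_1^*\theta_2^*}{\delta_0(\theta_1^*+\theta_2^*)}},
\end{align}
so it suffices to show the ratio $\theta_1^*\theta_2^* / (\theta_1^* + \theta_2^*)$ tends to $0$. Since $\theta_1^*\theta_2^* \le \theta_1^*$ and $\theta_1^* + \theta_2^* \ge \theta_1^*$ is not quite enough (it only bounds the ratio by $1$), I would instead bound $\theta_1^*\theta_2^*/(\theta_1^*+\theta_2^*) \le \min(\theta_1^*, \theta_2^*) \le \theta_1^* \to 0$, using that $\theta_1^*\theta_2^*/(\theta_1^*+\theta_2^*)$ is the harmonic-mean-type quantity bounded above by $\min(\theta_1^*,\theta_2^*)$. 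Hence the ratio $\to 0$ and $p_1^* + p_2^* \to 1$.

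The main obstacle is making the first-order squeeze for $\theta_i^*$ airtight, because it is not a priori obvious how $\theta_1^*$ and $\theta_2^*$ jointly behave as the two-dimensional parameter constraint $\lambda_1\lambda_2 \to 1$ is approached along an arbitrary path (e.g. $\lambda_1 \to 1, \lambda_2 \to 1$ versus $\lambda_1 \to \infty, \lambda_2 \to 0$). I would handle this by noting that the second-order coefficients $a_i, b_i$ in the expansions stay bounded away from $0$ and $\infty$ as long as the $\delta$'s are fixed and the $\lambda$'s range over compact subsets of $(0,\infty)$ avoiding the degenerate endpoints; for paths where some $\lambda_i \to 0$ or $\lambda_i \to \infty$ one checks directly from \eqref{eq:Theta} that the corresponding $\theta_i^*$ is forced to $0$ (if $\lambda_j \to \infty$ then $\varphi_i'(0^+) = 1/\lambda_j \to 0$, collapsing the fixed point). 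Alternatively — and this is probably the slickest packaging — I would argue by contradiction: if $\theta_1^*$ did not tend to $0$ along some sequence with $\lambda_1\lambda_2 \to 1$, pass to a subsequence where $\theta_1^* \to \ell > 0$ (allowing $\ell = +\infty$, ruled out by uniform boundedness of admissible strategies, or handled separately), take limits in $\varphi_2(\theta_1^*) = \varphi_1^{-1}(\theta_1^*)$ using continuity of $\varphi_i$ in all arguments, and derive $\varphi_2(\ell) = \varphi_1^{-1}(\ell)$ in the limiting regime $\lambda_1\lambda_2 = 1$ — but Theorem \ref{thm:eq} shows no such positive solution exists when $\lambda_1\lambda_2 = 1$, a contradiction. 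This contradiction argument is cleanest because it offloads all the case analysis onto the already-proven nonexistence statement.
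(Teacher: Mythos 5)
Your proof is correct, but it packages the argument differently from the paper, and a side-by-side comparison is informative. The paper first invokes the monotonicity of $\theta_i^*$ in $\lam_2$ from Corollary \ref{prop:sen} (fixing $\lam_1$ and letting $\lam_2 \nearrow 1/\lam_1$), so the limits $\theta_i^\circ$ exist automatically; it then derives a contradiction from the pointwise inequality $\varphi_2(\theta_1^\circ/2 \mid \lam_2) > \varphi_1^{-1}(\theta_1^\circ/2 \mid \lam_2)$ by passing to the boundary $\lam_2 = 1/\lam_1$, where the reverse strict inequality holds because $\varphi_1^{-1}$ is convex, $\varphi_2$ concave, and their slopes at the origin coincide. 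Your contradiction route skips monotonicity and instead extracts a convergent subsequence $\theta_1^* \to \ell$, passes to the limit in the fixed-point identity $\varphi_2(\theta_1^*) = \varphi_1^{-1}(\theta_1^*)$, and appeals to the nonexistence statement of Theorem \ref{thm:eq} at $\lam_1\lam_2 = 1$; this is slightly more self-contained in that it reuses the theorem as a black box rather than re-deriving the boundary inequality, but it requires you to separately justify (i) that $\theta_i^*$ stays bounded (which does hold: from \eqref{eq:Theta}, $\varphi_i(x) \le (\delta_0 + 2\delta_i)/2$ uniformly, so ``uniform boundedness of admissible strategies'' should really be replaced by this explicit bound on $\varphi_i$), and (ii) that the $\lam$'s converge so that the limiting $\varphi_i$ is meaningful; the paper sidesteps both issues by fixing $\lam_1$, which is also the cleanest reading of what ``$\lim_{\lam_1\lam_2 \nearrow 1}$'' means. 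Your preliminary Taylor-expansion sketch is, as you yourself note, circular (dividing by $\theta_1^*$ and reading off the asymptotic size presupposes $\theta_1^* \to 0$), so the subsequence-contradiction version is the one to keep. For the second limit $p_1^* + p_2^* \to 1$, your argument via
\begin{align}
p_1^* + p_2^* = \frac{1}{1 + \dfrac{2\theta_1^*\theta_2^*}{\delta_0(\theta_1^*+\theta_2^*)}}, \qquad \frac{\theta_1^*\theta_2^*}{\theta_1^*+\theta_2^*} \le \min(\theta_1^*,\theta_2^*) \to 0,
\end{align}
is actually more explicit than the paper's, which only gives the verbal economic observation that vanishing loadings force actuarially fair pricing and hence full insurance; your computation is a genuine improvement in rigor at that step.
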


\begin{proof}
	See Appendix \ref{sec:app} for proof.
\end{proof}

We continue to investigate how competition affects each player's welfare in equilibrium, as measured by its value function. From the above two corollaries, we conjecture that the insurer's equilibrium value function is increasing with respect to $\lam_1$ and $\lam_2$, but the reinsurers' welfare decreases when competition intensifies. We analytically prove the former conjecture below and numerically confirm the latter in the next section.

\begin{corollary}
Suppose $0 < \lam_1 \lam_2 < 1$. The insurer's equilibrium value function is increasing with respect to both $\lam_1$ and $\lam_2$.
\end{corollary}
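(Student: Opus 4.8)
The plan is to reduce the claim to the explicit formula for $f_0$ at equilibrium and then differentiate. By Theorem \ref{thm:eq} the equilibrium premiums $\theta^*=(\theta_1^*,\theta_2^*)$ are constants, so plugging $\theta(s)\equiv\theta^*$ into \eqref{eq:Vb0_sol}--\eqref{eq:f0} gives the insurer's equilibrium value function as $\Vb_0^{\theta^*}(t,x)=-\tfrac{1}{\delta_0}\ee^{-\delta_0 x+f_0(t)}$ with
\[
f_0(t)=\delta_0(T-t)\bigl(\mu-c+g(\theta_1^*,\theta_2^*)\bigr),\qquad g(a,b):=\frac{\delta_0\sigma^2\,ab}{\delta_0 a+\delta_0 b+2ab}.
\]
Since $f_0\mapsto-\tfrac{1}{\delta_0}\ee^{-\delta_0 x+f_0}$ is strictly decreasing, showing that $\Vb_0^{\theta^*}(t,x)$ is increasing in $\lambda_i$ is equivalent to showing that $f_0(t)$ is decreasing in $\lambda_i$; and because $\mu-c$ and $\delta_0(T-t)\ge 0$ are free of $\lambda$, it suffices to show that the composite quantity $g(\theta_1^*,\theta_2^*)$ is decreasing in each of $\lambda_1,\lambda_2$. (At $t=T$ the value function is the $\lambda$-independent terminal value $-\tfrac1{\delta_0}\ee^{-\delta_0 x}$, so "increasing" means strictly increasing on $[0,T)$ and constant at $T$.)

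First I would record that $g$ is coordinatewise strictly increasing on $(0,\infty)^2$: a one-line quotient-rule computation gives $\partial g/\partial a=\delta_0^2\sigma^2 b^2/(\delta_0 a+\delta_0 b+2ab)^2>0$ and, by symmetry, $\partial g/\partial b>0$. Next I would invoke Corollary \ref{prop:sen}, which supplies the two sign facts I need---$\partial\theta_1^*/\partial\lambda_i<0$ and $\partial\theta_2^*/\partial\lambda_i<0$ for $i=1,2$---together with the differentiability of $\lambda\mapsto\theta^*$ that is established there (via the implicit function theorem applied to the fixed-point system \eqref{eq:fixed_point}). The chain rule then yields, for $i=1,2$,
\[
\frac{\partial}{\partial\lambda_i}\,g(\theta_1^*,\theta_2^*)=\frac{\partial g}{\partial a}\,\frac{\partial\theta_1^*}{\partial\lambda_i}+\frac{\partial g}{\partial b}\,\frac{\partial\theta_2^*}{\partial\lambda_i}<0,
\]
each summand being positive $\times$ negative. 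Hence $\partial f_0(t)/\partial\lambda_i=\delta_0(T-t)\,\partial g/\partial\lambda_i\le 0$, strictly negative for $t<T$, and therefore $\partial\Vb_0^{\theta^*}(t,x)/\partial\lambda_i=-\tfrac1{\delta_0}\ee^{-\delta_0 x+f_0(t)}\,\partial f_0(t)/\partial\lambda_i\ge 0$, which closes the argument.

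I do not expect a genuine obstacle here: the work is essentially bookkeeping once Corollary \ref{prop:sen} is in hand. The one point requiring a word of justification is the legitimacy of the chain-rule step, i.e.\ that $\theta^*$ depends differentiably (or at least monotonically) on $(\lambda_1,\lambda_2)$ throughout $\{0<\lambda_1\lambda_2<1\}$; this is not a new burden, since it is precisely the regularity already used to obtain the signs in Corollary \ref{prop:sen}. If one prefers to avoid differentiability altogether, the same conclusion follows from monotone comparative statics: $g$ is coordinatewise strictly increasing, each $\theta_k^*$ is strictly decreasing in $\lambda_i$ by Corollary \ref{prop:sen}, so $g(\theta_1^*,\theta_2^*)$ is strictly decreasing in $\lambda_i$, whence $f_0(t)$ decreases and $\Vb_0^{\theta^*}(t,x)$ increases.
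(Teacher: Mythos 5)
Your proof is correct and follows essentially the same route as the paper's: write the equilibrium value function via \eqref{eq:Vb0_sol}--\eqref{eq:f0} with constant $\theta^*$, check that $\partial f_0^*/\partial\theta_i^*=\delta_0^3\sigma^2(\theta_j^*)^2(T-t)/(\delta_0\theta_1^*+\delta_0\theta_2^*+2\theta_1^*\theta_2^*)^2>0$, and combine with the signs $\partial\theta_i^*/\partial\lambda_k<0$ from Corollary \ref{prop:sen} via the chain rule. The only additions you make beyond the paper's argument are the explicit remark about differentiability of $\lambda\mapsto\theta^*$ and the alternative monotone-comparative-statics phrasing, both of which are fine.
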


\begin{proof}
First,  using Theorems \ref{thm:insurer} and \ref{thm:eq}, we can show that the insurer's equilibrium value function is given by 
	\begin{align}
		\label{eq:V0star}
		V_0^*(t,x)=-\frac{1}{\delta_0}\ee^{-\delta_i x + f_0^*(t)}, \quad (t, x) \in [0, T] \times \Rb,
	\end{align}
	in which $f_0^*$ is defined by
	\begin{align}
		\label{f0star}
		f_0^*(t) = \delta_0 \bigg[   \mu-c+ \dfrac{ \delta_0\sigma^2\theta_1^*\theta_2^*} {\delta_0 \theta_1^* + \delta_0\theta_2^* + 2 \theta_1^* \theta_2^* } \bigg](T-t)
	\end{align}
with $\theta_1^*$ and $\theta_2^*$ being the unique equilibrium premium loadings of the two reinsurers established in Theorem \ref{thm:eq}.
	
From \eqref{f0star}, we directly compute 
\begin{align*}
	\frac{\partial f_0^*(t)}{\partial \theta^*_i}=\dfrac{ \delta_0^3\sigma^2(\theta_j^*)^2} {\left(\delta_0 \theta_1^* + \delta_0\theta_2^* + 2 \theta_1^* \theta_2^*\right)^2 }(T-t)>0,\quad t\in[0,T),
\end{align*}
which, combined with \eqref{eq:theta_sen}, implies 
\begin{align*}
	\frac{\partial f_0^*(t)}{\partial \lambda_i}<0, \quad i = 1, 2.
\end{align*}
The above result, along with the expression in \eqref{eq:V0star}, proves the desired claim.
\end{proof}

\subsection{The Special Case of $\lambda_1 \lambda_2 = 0$}
\label{sub:lam_zero}

The previous analysis assumes that $\lam_i>0$ for $i = 1, 2$ and shows that the game equilibrium exists if $0 < \lam_1 \lam_2 < 1$. In this subsection, we consider the special case of $\lam_1 \lam_2 = 0$; recall that the objective of the reinsurers in \cite{cao2023areinsurance} corresponds to the case of $\lam_1 = \lam_2 = 0$.

First, we consider the case with $\lam_i = 0$ but $\lam_j > 0$, in which $i, j = 1, 2$ and $i \neq j$. In this case, the function $\varphi_j$, originally defined in \eqref{eq:Theta} for $\lam_1, \lam_2 >0$, changes to 
\begin{align}
	\label{eq:Theta_j}
		\varphi_j(x) = \frac{(\delta_0+2\delta_j) x+\delta_0\delta_j}{2x+\delta_0}.
\end{align}
We easily obtain the following limit results of $\varphi_j$: 
\begin{align*}
	\lim_{x \to 0} \, \varphi_j(x) = \delta_j
	\quad \text{and} \quad 
	\lim_{x \to +\infty} \, \varphi_j(x) = \delta_j+\frac{\delta_0}{2},
\end{align*}
and derive the first- and second-order derivatives of $\varphi_j$ by 
\begin{align*}
	\varphi_j'(x) = \frac{\delta_0^2}{(2x+\delta_0)^2}>0
	\quad \text{and} \quad 
	\varphi_j''(x) = -\frac{4\delta_0^2}{(2x+\delta_0)^3}<0.
\end{align*}
As such, $\varphi_j$ has an inverse in the interval $ (\delta_j,\delta_j+\frac{\delta_0}{2})$, which we denote by $\varphi_j^{-1}$. Regarding this inverse, we have, for $x\in(\delta_j,\delta_j+\frac{\delta_0}{2})$,  
\begin{align}
	\left(\varphi_j^{-1}\right)'(x) > 0 \quad \text{and} \quad  \left(\varphi_j^{-1}\right)''(x) > 0,
\end{align}
and 
\begin{align}
	\lim_{x \to \delta_j} \, \varphi_j^{-1}(x)= 0 \quad \text{and} \quad  \lim_{x \to \delta_j+\frac{\delta_0}{2}} \, \varphi_j^{-1}(x)= +\infty.
\end{align}
Given the properties of $\varphi_i$ and $\varphi_j^{-1}$, the system \eqref{eq:fixed_point} admits a unique, positive solution $(\tho, \tht)$, in which $\theta_j^*\in (\delta_j,\delta_j+\frac{\delta_0}{2})$.

Second, we consider the case with $\lam_1 = \lam_2 = 0$. In this case, we obtain
\begin{align}
		\varphi_1(x) = \frac{(\delta_0+2\delta_1) x+\delta_0\delta_1}{2x+\delta_0}
		\quad \text{and} \quad
		\varphi_2(x) = \frac{(\delta_0+2\delta_2) x+\delta_0\delta_2}{2x+\delta_0}.
\end{align}
With the above $\varphi_1$ and $\varphi_2$, we can solve the system \eqref{eq:fixed_point} analytically and obtain the unique solution in closed form by
\begin{align}
	\label{eq:theta_eq_zero}
	\begin{cases}
			\theta_1^* = \dfrac{\delta_1}{2}+\dfrac{1}{2}\sqrt{\dfrac{\delta_0+\delta_1}{\delta_0+\delta_2}(\delta_0\delta_1+\delta_0\delta_2+\delta_1\delta_2)}, \\[2ex]
		\theta_2^* = \dfrac{\delta_2}{2}+\dfrac{1}{2}\sqrt{\dfrac{\delta_0+\delta_2}{\delta_0+\delta_1}(\delta_0\delta_1+\delta_0\delta_2+\delta_1\delta_2)}.
	\end{cases}
\end{align}

Therefore, the main result in Theorem \ref{thm:eq} obtained for $0 < \lam_1 \lam_2 < 1$ readily extends to the case when $\lam_1 \lam_2 = 0$, as summarized below. 

\begin{theorem}
	\label{thm:eq1}
	If $\lam_1 \lam_2 =0$, there exists a unique Nash equilibrium of premium strategies, $\theta^* = (\theta_1^*, \theta_2^*)$, for the two competing reinsurers, and the  insurer's equilibrium reinsurance strategy is given by $p^* = (\bar{p}_1^{\theta^*}, \bar{p}_2^{\theta^*})$. 
	In addition, the equilibrium premium pair $(\theta_1^*, \theta_2^*)$ is characterized by one of the following two cases:
	\begin{enumerate}
		\item If $\lam_i = 0$ but $\lam_j > 0$ with $i, j = 1, 2$ and $i \neq j$, then $(\theta_i^*, \theta_j^*)$ is the unique fixed point of the mapping $(\theta_i, \theta_j) \mapsto (\varphi_i(\theta_j), \varphi_j(\theta_i))$, in which $\varphi_i$ and $\varphi_j$ are defined by \eqref{eq:Theta} and \eqref{eq:Theta_j}, respectively. 
		
		\item If $\lam_1 = \lam_2 = 0$, then $(\theta_1^*, \theta_2^*)$ is given by \eqref{eq:theta_eq_zero}. 
	\end{enumerate}
\end{theorem}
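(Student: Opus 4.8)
The plan is to re-run the argument of Theorem~\ref{thm:eq} with the reinsurers' best-response maps adapted to a vanishing competition degree. By Theorem~\ref{thm:reinsurer} the leader's best response is pointwise, $\bar\theta_i^{\theta_j}(t)=\varphi_i(\theta_j(t))$, so exactly as in the proof of Theorem~\ref{thm:eq} any Nash equilibrium $\theta^*=(\theta_1^*,\theta_2^*)$ must solve $\theta_1^*(t)=\varphi_1(\theta_2^*(t))$ and $\theta_2^*(t)=\varphi_2(\theta_1^*(t))$ for every $t\in[0,T]$. It therefore suffices to produce a unique fixed point in $(0,+\infty)\times(0,+\infty)$ of the bivariate map $(\theta_1,\theta_2)\mapsto(\varphi_1(\theta_2),\varphi_2(\theta_1))$; such a fixed point is then necessarily constant, and by Definition~\ref{def:eq} together with Theorem~\ref{thm:insurer} (evaluated at the constant $\theta^*$) the insurer's equilibrium strategy is the constant $p^*=(\pb_1^{\theta^*},\pb_2^{\theta^*})$, which is admissible. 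The only bookkeeping point is that setting $\lambda_i=0$ leaves formula~\eqref{eq:Theta} for $\varphi_i$ intact (it involves only $\lambda_j$) while it collapses $\varphi_j$ to the M\"obius form~\eqref{eq:Theta_j}; when both competition degrees vanish, both maps become M\"obius functions.

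For case~(1), $\lambda_i=0<\lambda_j$: $\varphi_i$ retains every property established in the proof of Theorem~\ref{thm:eq} (strictly increasing, strictly concave, $\varphi_i(0^+)=0$, with a strictly increasing, strictly convex inverse $\varphi_i^{-1}$ on $(0,+\infty)$ satisfying $\varphi_i^{-1}(0^+)=0$ and $(\varphi_i^{-1})'(x)\to+\infty$), while $\varphi_j$ in~\eqref{eq:Theta_j} has the properties recorded in the discussion above (strictly increasing, strictly concave, $\varphi_j(0^+)=\delta_j$, $\varphi_j(x)\uparrow\delta_j+\delta_0/2$). A pair $(\theta_i,\theta_j)\in(0,+\infty)^2$ is a fixed point iff $\theta_j=\varphi_j(\theta_i)$ and $\varphi_j(\theta_i)=\varphi_i^{-1}(\theta_i)$, so I would set $G(x):=\varphi_j(x)-\varphi_i^{-1}(x)$ on $(0,+\infty)$, which is strictly concave (concave minus convex), with $G(0^+)=\delta_j>0$ and $G(x)\to-\infty$ since $\varphi_j$ is bounded above while $\varphi_i^{-1}$ diverges. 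A strictly concave function exhibiting those two boundary behaviours has exactly one zero $\theta_i^*\in(0,+\infty)$; putting $\theta_j^*:=\varphi_j(\theta_i^*)\in(\delta_j,\delta_j+\delta_0/2)$ and using $\varphi_j(\theta_i^*)=\varphi_i^{-1}(\theta_i^*)$ gives $\varphi_i(\theta_j^*)=\theta_i^*$, so $(\theta_i^*,\theta_j^*)$ is a fixed point, and it is the only one because every fixed point forces $\theta_i$ to be a zero of $G$. This is precisely characterization~(1).

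For case~(2), $\lambda_1=\lambda_2=0$: here $\varphi_k(x)=\big((\delta_0+2\delta_k)x+\delta_0\delta_k\big)/(2x+\delta_0)$ for $k=1,2$, and I would write the fixed-point relations as the two bilinear equations $\theta_1(2\theta_2+\delta_0)=(\delta_0+2\delta_1)\theta_2+\delta_0\delta_1$ and $\theta_2(2\theta_1+\delta_0)=(\delta_0+2\delta_2)\theta_1+\delta_0\delta_2$. Subtracting eliminates the $\theta_1\theta_2$ term and leaves the linear relation $(\delta_0+\delta_1)\theta_2-(\delta_0+\delta_2)\theta_1=\tfrac12\delta_0(\delta_2-\delta_1)$; substituting it back into one of the two equations produces a quadratic in $\theta_1$ with positive leading coefficient and negative constant term, hence a unique positive root, and a direct computation identifies this root with the expression for $\theta_1^*$ in~\eqref{eq:theta_eq_zero} (with $\theta_2^*$ then read off from the linear relation). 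This gives characterization~(2), and in both cases the remaining assertions---constancy of $\theta^*$, its admissibility, and the stated form of $p^*$---follow verbatim from Theorems~\ref{thm:insurer} and~\ref{thm:eq}.

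I expect the only genuinely non-mechanical step to be the boundary analysis in case~(1): in Theorem~\ref{thm:eq} the side condition $\lambda_1\lambda_2<1$ was forced precisely because \emph{both} best-response maps vanished at the origin, so one had to compare their slopes there; once $\lambda_i=0$, the map $\varphi_j$ no longer vanishes at $0$ ($\varphi_j(0^+)=\delta_j>0$) and is moreover bounded, so the strict concavity of $G$ yields existence and uniqueness with no side condition at all---this is exactly why Theorem~\ref{thm:eq1} holds for all parameter values once $\lambda_1\lambda_2=0$. The algebra in case~(2) and the verification/constancy arguments are routine given the earlier results.
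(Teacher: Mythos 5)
Your proposal is correct and follows essentially the same route as the paper: reduce to the fixed-point system for the modified best-response maps, use monotonicity/concavity and the boundary behaviour ($\varphi_j(0^+)=\delta_j>0$ and boundedness versus the convex, divergent $\varphi_i^{-1}$) to get a unique positive fixed point when exactly one $\lambda$ vanishes, and solve the resulting Möbius system explicitly when both vanish. Your difference-function $G$ and the explicit elimination algebra in case (2) merely fill in details the paper leaves implicit; the verified quadratic indeed yields \eqref{eq:theta_eq_zero}.
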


\section{Economic Study}
\label{sec:econ}

We have solved the two-layer reinsurance contracting and competition game and obtained its equilibrium in Theorem \ref{thm:eq}. The characterization of all equilibrium controls is complete, subject to finding the fixed point, $\theta^*=(\theta_1^*, \theta_2^*)$, of the mapping $(\theta_1, \theta_2) \mapsto (\varphi_1(\theta_2), \varphi_2(\theta_1))$. However, finding such a $\theta^*$ can only be done by numerical methods in general. In this section, we conduct a thorough economic study focusing on the sensitivity analysis of the equilibrium controls, in particular, the insurer's equilibrium reinsurance strategy $p^*$.  

In the subsequent analysis, the default model parameters are set as follows:
\begin{table}[h]
	\centering
	\begin{tabular}{ccccc}\hline 
		$\delta_0$ & $\delta_1$ & $\delta_2$ & $\lam_1$ & $\lam_2$ \\ \hline 
		5 & 4 & 6 & 0.3 & 0.7 \\ \hline 
	\end{tabular}
\caption{Default Model Parameters}
\label{tab:para}
\end{table}
\\
Note from Table \ref{tab:para} that Reinsurer $2$ exhibits higher risk aversion and stronger competition degree than Reinsurer $1$ does. Also, $0 < \lam_1 \lam_2 < 1$ is satisfied, implying that there exists a unique equilibrium by Theorem \ref{thm:eq}. When we investigate the impact of one particular parameter, say $\delta_0$, we will allow it to vary over a reasonable range but keep the rest parameters fixed as in Table \ref{tab:para}.

\subsection{Sensitivity Analysis for the Insurer}

We first conduct a numerical study to investigate how model parameters affect the insurer's equilibrium strategy $p^* = (p_1^*, p_2^*)$, in which $p_i^*$ is the proportion ceded to Reinsurer $i$, $i=1,2$. Note that we were unable to derive any monotonicity results of $p^*$ in Proposition \ref{prop:sen}. 

In Figure \ref{pdelta0}, we plot how the insurer's equilibrium strategy $p_i^*$ changes with respect to its own risk aversion parameter $\delta_0$. It is pleasing to see that a more risk averse insurer cedes a larger proportion of its risk to reinsurers under equilibrium because doing so reduces the uncertainty of its terminal wealth $X_0(T)$ and yields a larger expected utility.\footnote{Such a result holds true under different Stackelberg game frameworks. For instance, when there is only one reinsurer, both \cite{StackelbergLvYang} and \cite{chen2019stochastic} arrive at a similar result; that is, the increase of the insurer's risk version leads to an increase of ceded risk to the reinsurer. Note that the reinsurer is a utility maximizer in \cite{StackelbergLvYang} but a mean-variance agent in \cite{chen2019stochastic}. This result can be further extended from risk aversion to ambiguity aversion, as shown in \cite{StackelbergBinZou}.} 
We observe from Figure \ref{pdelta0} that $p_1^* > p_2^*$; recall from Table \ref{tab:para} that $\delta_1 < \delta_2$. This is because the reinsurance policy offered by a more risk averse reinsurer is more expensive than the one offered by a less risk averse reinsurer, and the insurer purchases more cheaper contracts. 
	
\begin{figure}[H]
	\centering
	\includegraphics[width=0.48\linewidth, trim = 1cm 0cm 1cm 1cm, clip=true]{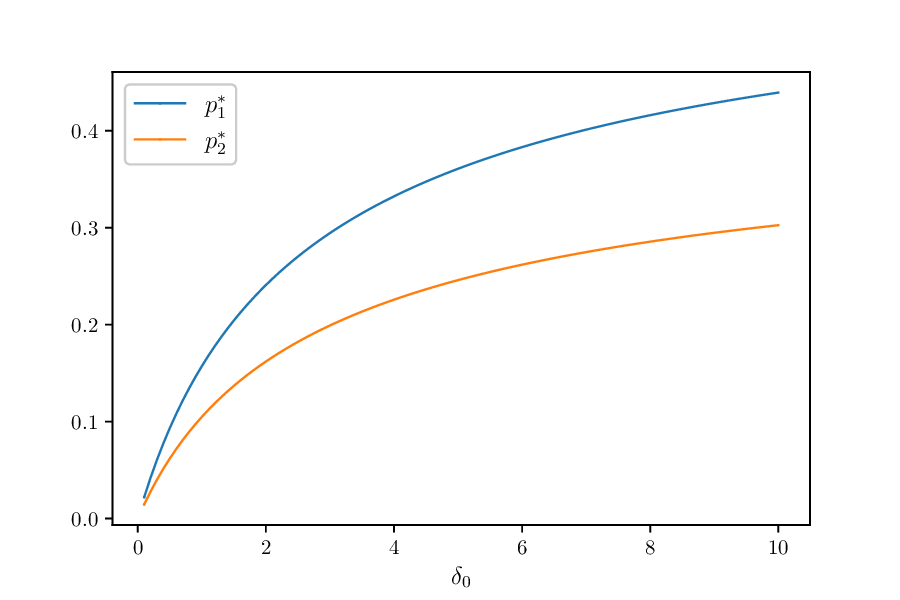}
 \\[-3ex]
	\caption{Impact of $ \delta_0 $ on $ p_1^* $ and $ p_2^* $}
	\label{pdelta0}
\end{figure}

We next study how the risk aversion of the two reinsurers, $\delta_1$ and $\delta_2$, affect the insurer's decision and plot $p_i^*$ as a function of these parameters in Figures \ref{pdelta1} and \ref{pdelta2}, respectively. We observe that when the risk aversion coefficient of one reinsurer
increases, the proportion of reinsurance ceded to this reinsurer will decrease significantly (see the blue curve in Figure \ref{pdelta1} and the orange curve in Figure \ref{pdelta2}); however, the proportion of reinsurance ceded to the other reinsurer may stay almost unchanged (Figure \ref{pdelta1})
or decrease slightly (Figure \ref{pdelta2}). The first result can be easily understood as follows. When $\delta_i$ increases, the contract offered by Reinsurer $i$ becomes more expensive, and the insurer reacts by ceding less risk to Reinsurer $i$, leading to a decrease of $p_i^*$. However, the impact of $\delta_i$ on $p_j^*$, $i \neq j$, is more subtle because although Reinsurer $j$ will raise its premium $\theta_j^*$, the increase may be at a much slower pace  than Reinsurer $i$.

\begin{figure}[H]
	\begin{minipage}[t]{0.48\textwidth}
		\centering
		\includegraphics[width=\linewidth]{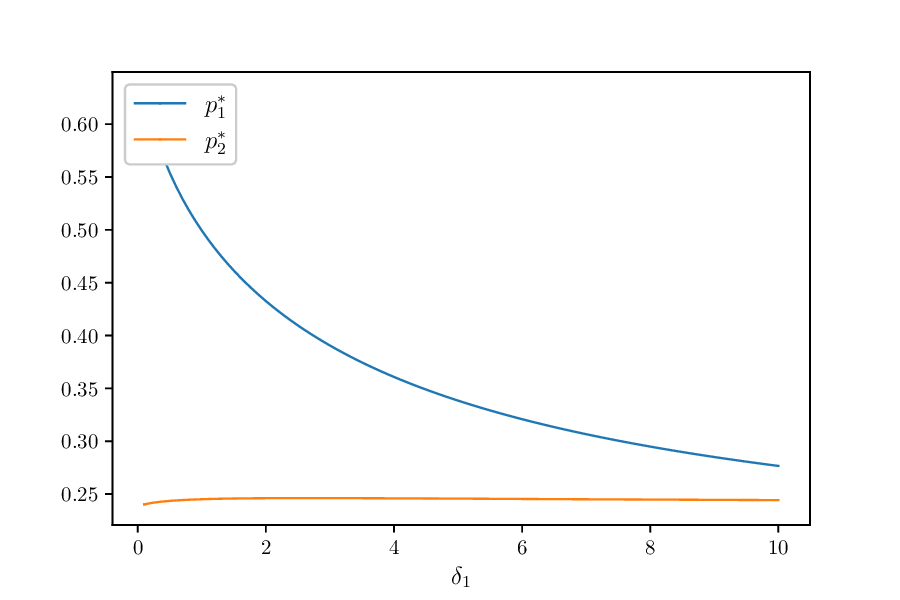}
  \\[-3ex]
		\caption{Impact of $ \delta_1 $ on $ p_1^* $ and $ p_2^* $}
		\label{pdelta1}
	\end{minipage}
	\begin{minipage}[t]{0.48\textwidth}
		\centering
		\includegraphics[width=\linewidth]{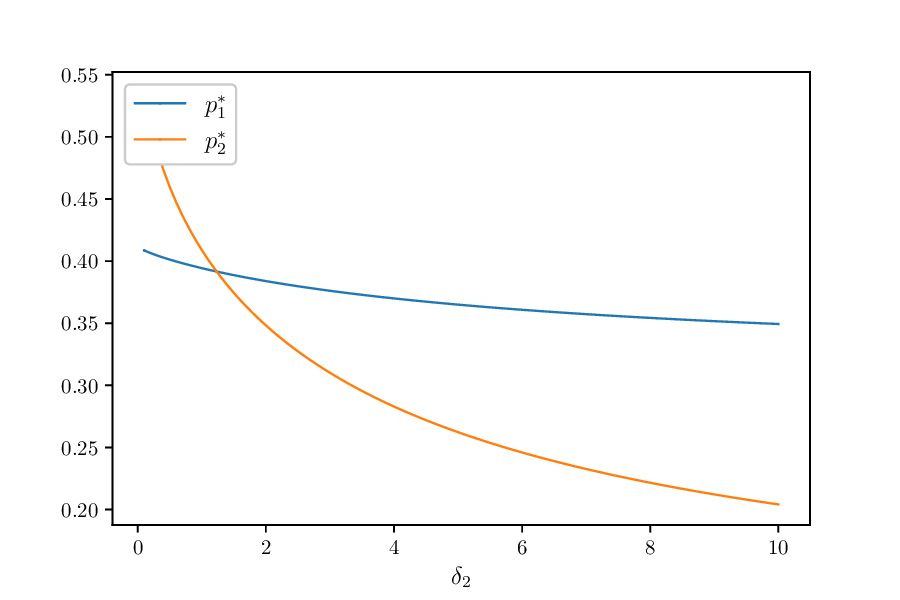}
  \\[-3ex]
		\caption{Impact of $ \delta_2 $ on $ p_1^* $ and $ p_2^* $}
		\label{pdelta2}
	\end{minipage}
\end{figure}

In the last step, we shift our attention to the impact of $\lam_1$ and $\lam_2$ on the insurer's equilibrium strategies $ p_1^* $ and $ p_2^* $. Recall that  $\lam_i$ measures the competition degree of Reinsurer $i$ to its competitor Reinsurer $j$ in the relative performance (see \eqref{eq:Y}). 
Both Figures \ref{plambda1} and \ref{plambda2} show that an increase in the competition degree of one reinsurer pushes the insurer to purchase more coverage from \emph{both} reinsurers, albeit at very different scales. This is because when $\lam_i$ increases, more intense competition drives both reinsurers to lower their premiums, which in turn incentivizes the insurer to cede more of its risk away to the reinsurers.

\begin{figure}[tbph]
	\begin{minipage}[t]{0.48\textwidth}
		\centering
		\includegraphics[width=\linewidth]{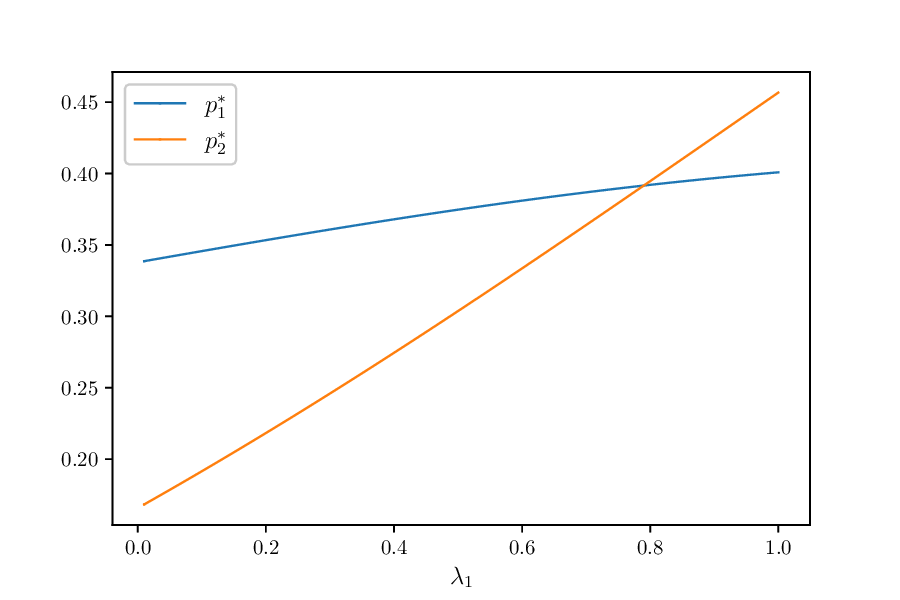}
  \\[-3ex]
		\caption{Impact of $ \lambda_1 $ on $ p_1^* $ and $ p_2^* $}
		\label{plambda1}
	\end{minipage}
	\begin{minipage}[t]{0.48\textwidth}
		\centering
		\includegraphics[width=\linewidth]{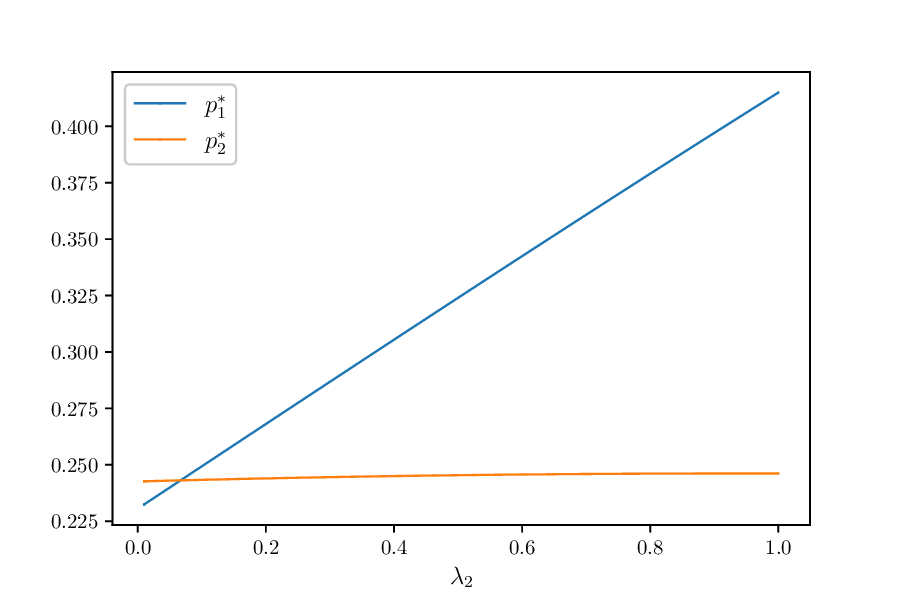}
  \\[-3ex]
		\caption{Impact of $ \lambda_2 $ on $ p_1^* $ and $ p_2^* $}
		\label{plambda2}
	\end{minipage}
\end{figure}

\subsection{Sensitivity Analysis for the Reinsurers}

We next turn our attention to the two reinsurers and investigate how model parameters affect their equilibrium strategies $\theta_1^*$ and $\theta_2^*$. Keep in mind that we have already obtained analytical results on the sensitivity of $\theta_1^*$ and $\theta_2^*$ in Proposition \ref{prop:sen}.

In Figures \ref{thetadelta0}-\ref{thetadelta2}, we plot how each reinsurer's equilibrium premium strategy $\theta_i^*$, $i=1,2$, reacts to the change of one player's risk aversion ($\delta_0$, $\delta_1$, or $\delta_2$). The plots in all figures confirm the positive relation between $\theta_i^*$ and $\delta_j$ established in \eqref{eq:theta_sen}. A larger $\delta_0$ implies a higher demand from the insurer for reinsurance coverage; upon recognizing this, both reinsurers explore such an opportunity by increasing premiums to achieve a higher expected utility. When the risk aversion $\delta_i$ of Reinsurer $i$ increases, one naturally anticipates an increase in its own premium $\theta_i^*$ because this helps reduce the risk taking by Reinsurer $i$ from the insurer. Interestingly, we observe that the increase of $\delta_i$ pushes $\theta_j^*$, the competitor's equilibrium premium, up; such a result is likely due to the direct competition implied by the relative performance used in optimization.
Last, we observe $\theta_2^* > \theta_1^*$ in most scenarios, except for very small $\delta_2$ in Figure \ref{thetadelta2}. This observation can be explained by the fact that Reinsurer 2 exhibits a higher degree of risk
aversion and places a greater emphasis on competition by choosing a larger $\lam_2$.

\begin{figure}[h]
	\centering
	\includegraphics[width=0.48\textwidth]{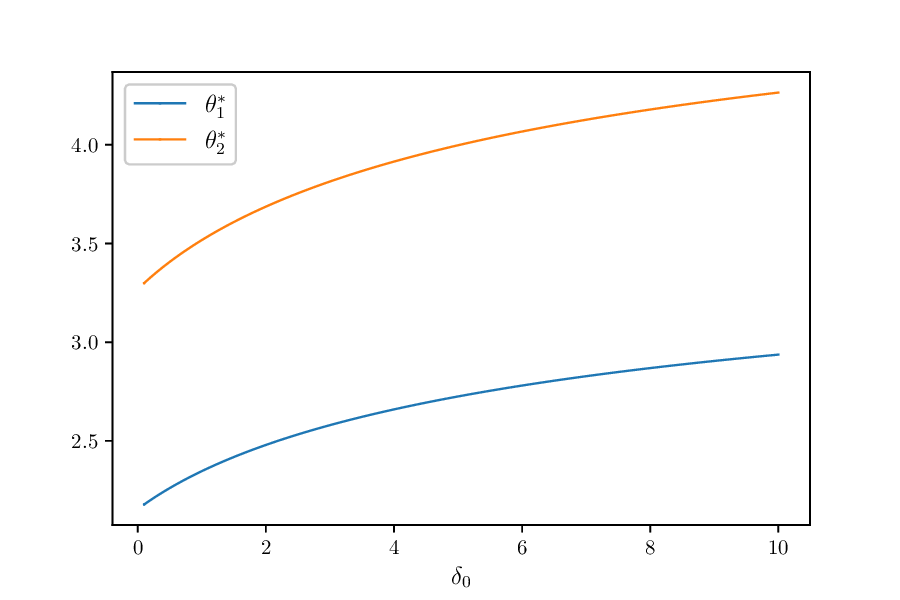}
 \\[-3ex]
	\caption{Impact of $ \delta_0 $ on $ \theta_1^* $ and $\theta_2^* $}
	\label{thetadelta0}
\end{figure} 

\begin{figure}[h]
	\begin{minipage}[t]{0.48\textwidth}
		\centering
		\includegraphics[width=\linewidth]{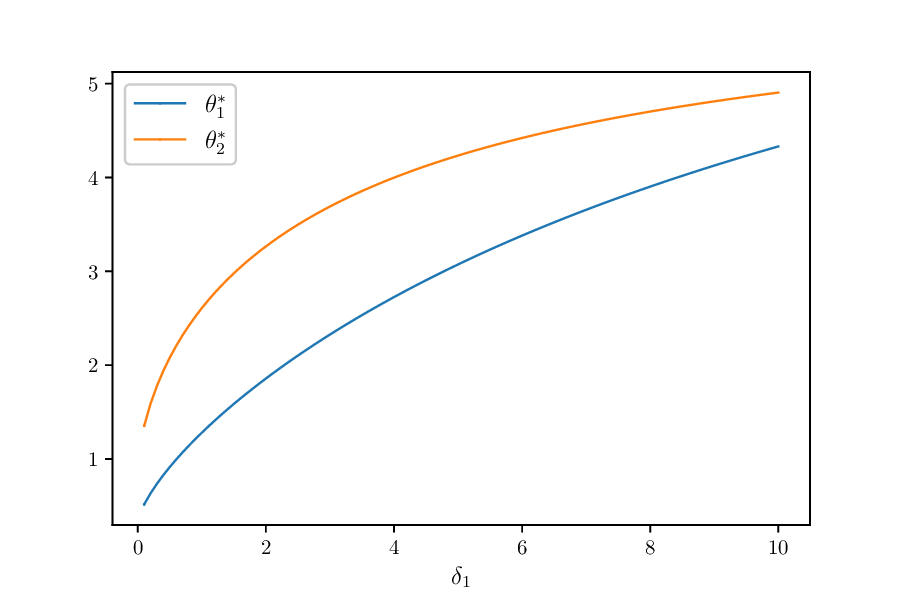}
  \\[-3ex]
		\caption{Impact of $ \delta_1 $ on $ \theta_1^* $ and $\theta_2^* $}
		\label{thetadelta1}
	\end{minipage}
	\begin{minipage}[t]{0.48\textwidth}
		\centering
		\includegraphics[width=\linewidth]{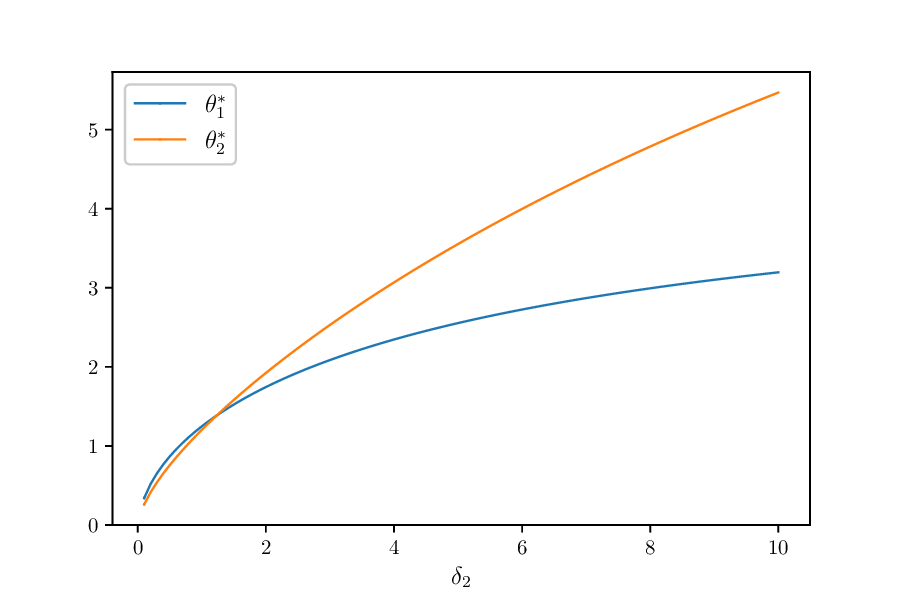}
  \\[-3ex]
		\caption{Impact of $ \delta_2 $ on $ \theta_1^* $ and $\theta_2^* $}
		\label{thetadelta2}
	\end{minipage}
\end{figure}

Next, we study the competition degree $\lam_i$ on the reinsurers' premium decision. 
Both Figures \ref{thetalambda1} and \ref{thetalambda2} show that when one reinsurer's competition degree  $\lam_i$ increases, \emph{both} reinsurers lower their premium loadings. This is due to the ``chain effect'' of the competition game explained as follows. A larger $\lam_i$ means that Reinsurer $i$ has a stronger incentive to compete with Reinsurer $j$, and to outperform its competitor Reinsurer $j$, Reinsurer $i$'s natural move is to lower its price by choosing a smaller $\theta_i^*$; as Reinsurer $j$ also takes into account Reinsurer $i$'s performance in evaluating its utility, it will reduce its loading $\theta_j^*$ in order to stay competitive to Reinsurer $i$. 

\begin{figure}[tbph]
	\begin{minipage}[t]{0.48\textwidth}
		\centering
		\includegraphics[width=\linewidth]{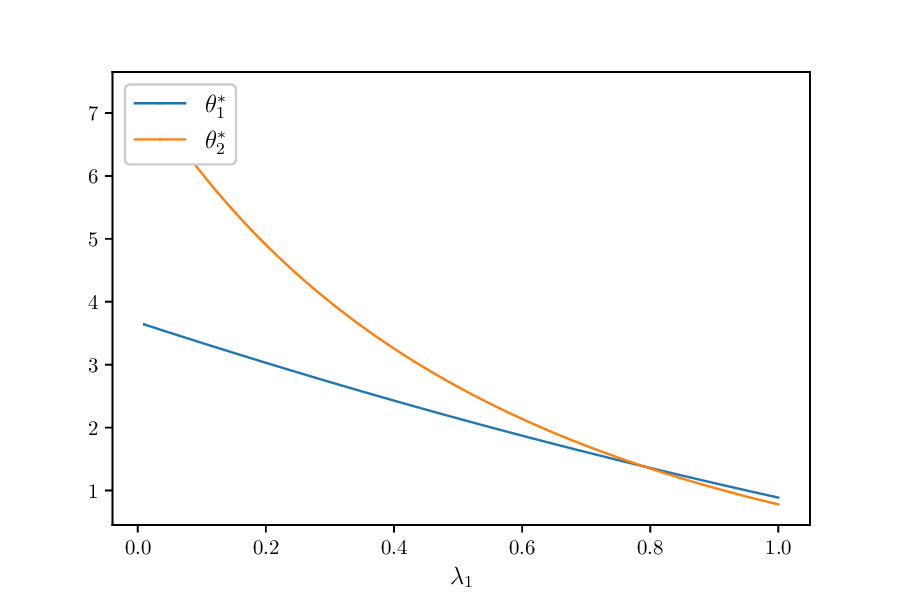}
  \\[-3ex]
		\caption{Impact of $ \lambda_1 $ on $ \theta_1^* $ and $\theta_2^* $}
		\label{thetalambda1}
	\end{minipage}
	\begin{minipage}[t]{0.48\textwidth}
		\centering
		\includegraphics[width=\linewidth]{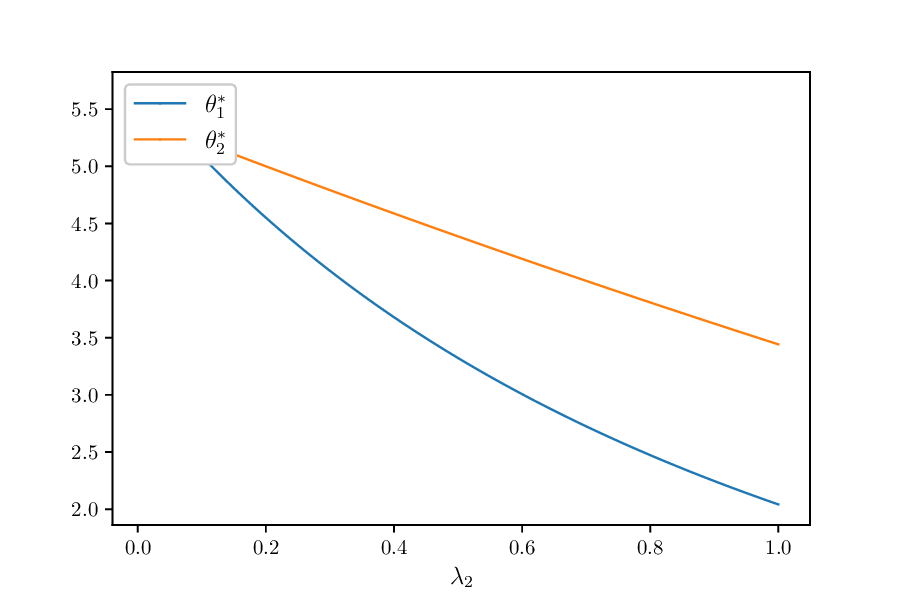}
  \\[-3ex]
		\caption{Impact of $ \lambda_2 $ on $ \theta_1^* $ and $\theta_2^* $}
		\label{thetalambda2}
	\end{minipage}
\end{figure}

Last, we study how competition, as measured by $\lam_i$, impacts the reinsurers' equilibrium value function. To that end, we first use Theorems \ref{thm:reinsurer} and \ref{thm:eq} to show that Reinsurer $i$'s equilibrium value function $V_i^*$, $i = 1, 2$, is given by 
	\begin{align}
		\label{eq:Vistar}
		V_i^*(t,x) = -\frac{1}{\delta_i}\ee^{-\delta_i x+f_i^*(t)}, \quad  (t, x) \in [0,T] \times \Rb, 
	\end{align}
in which $f_i^*$ is defined by
\begin{align}\label{eq:fi:star}
	f_i^*(t) = \dfrac{ \sigma^2\delta_0\delta_i(\theta_j^*-\lambda_j\theta_i^*)} {(\delta_0 \theta_1^* \!+\! \delta_0\theta_2^* \!+\! 2 \theta_1^* \theta_2^* )^2}\bigg[\!-\!\delta_0\theta_i^*\theta_j^*\!+\!\frac{1}{2}\delta_i(\theta_j^*\!-\!\lambda_j\theta_i^*)\bigg](T\!-\!t),
\end{align}
for $i,j=1,2$ and $i\neq j$.
From \eqref{eq:fi:star}, we obtain ($\propto$ means ``positively proportional to'')
\begin{align} f_i^*(t)\propto
\dfrac{ \theta_j^*-\lambda_j\theta_i^*} {(\delta_0 \theta_1^* \!+\! \delta_0\theta_2^* \!+\! 2 \theta_1^* \theta_2^* )^2}\bigg[\!-\!\delta_0\theta_i^*\theta_j^*\!+\!\frac{1}{2}\delta_i(\theta_j^*\!-\!\lambda_j\theta_i^*)\bigg].
 \end{align} 
With slight abuse of notation, define   
\begin{align} 
	\label{eq:fi_new}
	f_i^*=
\dfrac{ \theta_j^*-\lambda_j\theta_i^*} {(\delta_0 \theta_1^* \!+\! \delta_0\theta_2^* \!+\! 2 \theta_1^* \theta_2^* )^2}\bigg[\!-\!\delta_0\theta_i^*\theta_j^*\!+\!\frac{1}{2}\delta_i(\theta_j^*\!-\!\lambda_j\theta_i^*)\bigg],
 \end{align}
for $i,j=1,2$ and $i\neq j$. 
We plot the graphs of $f_1^*$ and $f_2^*$, defined by \eqref{eq:fi_new}, as a function of $\lam_1$ in Figure \ref{flambda1} and as a function of $\lam_2$ in Figure \ref{flambda2}. 
Recalling the inverse relation between $V^*_i$ and $f_i^*$ from \eqref{eq:Vistar}, we observe that as $\lam_1$ (or $\lam_2$) increases, \emph{both} $f_1^*$ and $ f_2^*$ increase, and in turn, both $V^*_1$ and $V^*_2$ decrease. As such, competition lowers both reinsurers' equilibrium value function.

 \begin{figure}[H]
	\begin{minipage}[t]{0.48\textwidth}
		\centering
		\includegraphics[width=\linewidth]{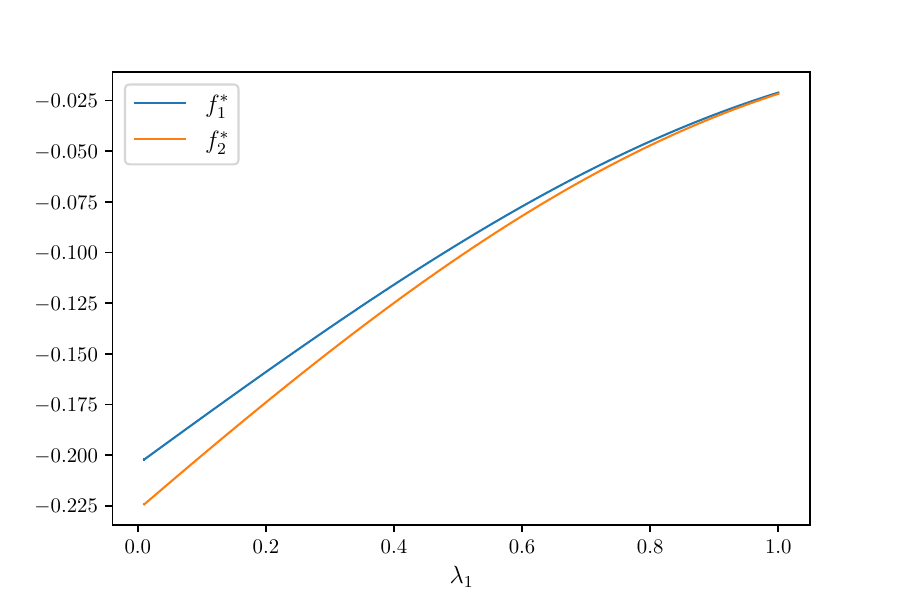}
  \\[-3ex]
		\caption{Impact of $ \lambda_1 $ on $ f_1^* $ and $ f_2^* $}
		\label{flambda1}
	\end{minipage}
	\begin{minipage}[t]{0.48\textwidth}
		\centering
		\includegraphics[width=\linewidth]{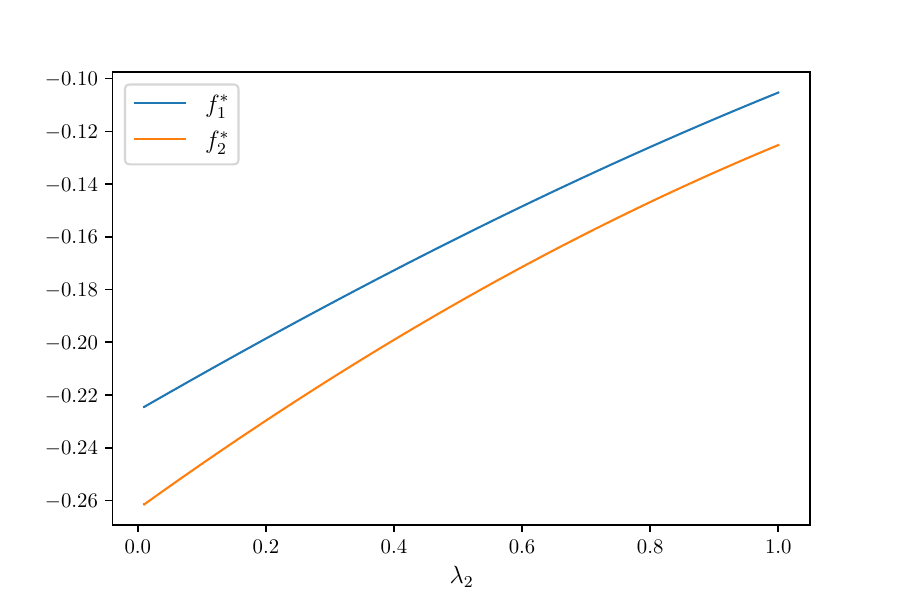}
  \\[-3ex]
		\caption{Impact of $ \lambda_2 $ on $ f_1^* $ and $ f_2^* $}
		\label{flambda2}
	\end{minipage}
\end{figure}

\section{Conclusion}
\label{sec:Conclusion}

In this paper, we introduce a two-layer stochastic game model to study reinsurance contracting and competition in a market with one insurer and two competing reinsurers. The reinsurance contracting problem between the insurer and each of the two reinsurers is captured by a Stackelberg game, in which the reinsurer is the leader and chooses the loading in its variance premium rule, and the insurer is the follower and chooses its proportional reinsurance strategy. 
Meanwhile, the competition between the two reinsurers is captured by the fact that each maximizes its expected utility of the relative performance at the terminal time. We solve such a complex game rigorously and completely. We show that a unique equilibrium exists and is fully characterized if $0 \le \lam_1 \lam_2 < 1$, and no equilibrium exists if $\lam_1 \lam_2 \ge 1$, in which $\lam_i \ge 0$ is the competition degree of Reinsurer $i$, $i = 1,2$. We also obtain several interesting qualitative results of the equilibrium strategies and conduct a numerical study to further study how insurer's and reinsurers' risk aversion coefficients, as well as competition parameters,  influence the equilibrium strategies.

\vspace{2ex}
\noindent
{\bf Acknowledgement.}  We would like  to thank two anonymous referees for valuable comments and the members of the group of Actuarial Sciences and Mathematical Finance  at the Department of Mathematical Sciences, Tsinghua University for their feedback. Zongxia Liang acknowledges the financial support from the National Natural Science Foundation of China (Nos.12271290, 12371477, 11871036).

\appendix
\section{Technical Results}
\label{sec:app}

We first provide a formal verification for the solution to the insurer's problem obtained in Theorem \ref{thm:insurer}. For a similar result, please refer to \cite{fleming2006controlled}[Theorem III.8.1, p.135].

\begin{theorem}
	[Verification theorem for the insurer's problem]
	\label{thm:veri_in}
	Let $v_0$ be a classical solution to the HJB equation \eqref{HJB-x0} associated with the insurer's problem in Definition \ref{def:insurer}. Then, for all $(t, x) \in [0,T] \times \Rb$ and all $\theta \in \Uc_1 \times \Uc_2$, we have
	\begin{itemize}
		\item[(a)] $v_0(x) \ge J_0^\theta(t,x; p)$ for every admissible strategy $p \in \Uc_0$.
		\item[(b)] If there exists an admissible $\pb^\theta \in \Uc_0$ such that 
		\begin{align*}
			\pb^\theta(s) = \argmin_{p(s) \in \Uc_0} \, \Ac_0^{(p, \theta)} \, v_0(s, \bar{X}^\theta_0(s))
		\end{align*}
	for almost all $s \in [t, T]$, in which $\bar{X}^\theta_0$ is the insurer's surplus under $\pb^\theta$, then 
	\begin{align}
		v_0(t, x) = \Vb_0^\theta(t,x) = J_0^\theta(t,x; \pb^\theta).
	\end{align}
	\end{itemize} 
\end{theorem}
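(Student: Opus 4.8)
The plan is the standard martingale verification argument based on It\^o's formula, carried out separately for the two claims.

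For part (a), I would fix $\theta = (\theta_1,\theta_2) \in \Uc_1 \times \Uc_2$, a point $(t,x) \in [0,T]\times\Rb$, and an arbitrary admissible $p \in \Uc_0$, and let $X_0 = X_0^{x,(p,\theta)}$ be the unique solution of \eqref{eq:X_0} with $X_0(t) = x$. Since $p$ and $\theta$ are deterministic and bounded, the coefficients of \eqref{eq:X_0} are deterministic and bounded, so $X_0$ is a Gaussian process whose mean and variance are bounded uniformly on $[t,T]$. Applying It\^o's formula to $v_0(s,X_0(s))$ over $[t,T]$ and using the definition of $\Ac_0^{(p,\theta)}$ yields
\[
v_0(T,X_0(T)) = v_0(t,x) + \int_t^T \Ac_0^{(p,\theta)} v_0(s,X_0(s))\,\dd s + \int_t^T \sigma\big(1-p_1(s)-p_2(s)\big)\,\partial_x v_0(s,X_0(s))\,\dd W(s).
\]
The HJB equation \eqref{HJB-x0} forces $\Ac_0^{(p,\theta)} v_0 \le \sup_{q\in\Uc_0} \Ac_0^{(q,\theta)} v_0 = 0$ pointwise, so the Lebesgue integral on the right is nonpositive. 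The stochastic integral is a true martingale: with the exponential shape of $v_0$ from Theorem \ref{thm:insurer} (to which the verification is applied), $\partial_x v_0(s,X_0(s)) = \ee^{-\delta_0 X_0(s) + f_0(s)}$, and $\Eb\int_t^T \sigma^2(1-p_1-p_2)^2 \ee^{-2\delta_0 X_0(s)+2f_0(s)}\,\dd s < \infty$ because $X_0(s)$ is Gaussian (hence has finite exponential moments of every order) while $p$ and $f_0$ are bounded. Taking expectations then gives $\Eb[v_0(T,X_0(T))] \le v_0(t,x)$; since $v_0(T,\cdot) = U_0$, the left-hand side equals $J_0^\theta(t,x;p)$, which is (a), and taking the supremum over $p$ gives $v_0(t,x) \ge \Vb_0^\theta(t,x)$.

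For part (b), I would repeat the computation with $X_0$ replaced by $\bar X_0^\theta$, the surplus generated by $\pb^\theta$, which exists and is unique since $\pb^\theta$ is assumed admissible. By the defining property of $\pb^\theta$ together with \eqref{HJB-x0}, $\Ac_0^{(\pb^\theta,\theta)} v_0(s,\bar X_0^\theta(s)) = \sup_{q\in\Uc_0} \Ac_0^{(q,\theta)} v_0(s,\bar X_0^\theta(s)) = 0$ for almost every $s \in [t,T]$, so the Lebesgue integral now vanishes identically; the stochastic integral is again a martingale by the same estimate. Taking expectations gives $\Eb[v_0(T,\bar X_0^\theta(T))] = v_0(t,x)$, i.e. $J_0^\theta(t,x;\pb^\theta) = v_0(t,x)$, and combining with part (a) yields $v_0(t,x) = \sup_{p\in\Uc_0} J_0^\theta(t,x;p) = \Vb_0^\theta(t,x)$, with the supremum attained at $\pb^\theta$, which is the claim.

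The only delicate point — the main obstacle — is the integrability needed to conclude the stochastic integral has zero expectation and that no mass is lost when passing to expectations (equivalently, to pass from a localized identity along $\tau_n = \inf\{s \ge t : |X_0(s)| \ge n\} \wedge T$ to the identity on $[t,T]$). This relies on the structural fact that admissible $p$ and $\theta$ are deterministic and bounded, so $X_0$ and $\bar X_0^\theta$ are Gaussian with uniformly controlled moments, together with the explicit form $v_0(t,x) = -\frac{1}{\delta_0}\ee^{-\delta_0 x + f_0(t)}$ from Theorem \ref{thm:insurer}, which makes the martingale estimate finite and dominates $v_0(\tau_n, X_0(\tau_n))$ by the integrable variable $\sup_{t\le s\le T} \frac{1}{\delta_0}\ee^{-\delta_0 X_0(s) + f_0(s)}$, so that dominated convergence applies. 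For a general classical solution of \eqref{HJB-x0}, the same argument goes through once one imposes a growth condition of the form $|v_0(t,x)| \le C(1+\ee^{-\delta_0 x})$ uniformly in $t \in [0,T]$.
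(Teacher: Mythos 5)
Your proof is correct and follows essentially the same route as the paper: Itô's formula applied to $v_0(\cdot,X_0(\cdot))$, the HJB inequality $\Ac_0^{(p,\theta)}v_0\le 0$ for the drift term, and a martingale/localization argument for the stochastic integral, with equality under $\pb^\theta$ giving part (b). Your explicit treatment of the integrability (Gaussian exponential moments of $X_0$ under bounded deterministic $p,\theta$, plus the growth condition needed for a general classical solution) matches, and if anything slightly sharpens, the paper's appeal to "a standard localization technique."
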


\begin{proof}
	(a) Let $v_0$ be a classical solution to the HJB equation \eqref{HJB-x0}. For every $(p, \theta) \in \Uc$,  applying It\^{o}'s formula to $v_0(\cdot, X_0(\cdot))$, we get 
\begin{align*}
	v_0(T, X_0(T)) &= v_0(t, X_0(t)) + \int_t^T \mathcal{A}^{(p, \theta)}_0 v_0(s,X_0(s)) \, \dd s \\
	&\quad + \int_t^T \sigma  (1-p_1(s)-p_2(s)) \partial_x v_0(s,X_0(s)) \, \dd W(s).
\end{align*}
Introduce short-handed notations 	$\hat{\mu}(t) :=c-\mu- \sigma^2 \theta_1(t)p_1^2(t)- \sigma^2\theta_2(t)p_2^2(t)$ and $\hat{\sigma}(t) := \sigma(1-p_1(t)-p_2(t))$. Recall from Definition \ref{def:ad} that both $p$ and $\theta$ are uniformly bounded; as such, $f_0$ in \eqref{eq:f0}, $\hat{\mu}$, and $\hat{\sigma}$ are also uniformly bounded. That is, there exists a positive constant, denoted by $M$, such that 
\begin{align}
	\label{eq:bounds}
\sup_{t\in[0,T]} \ee^{2f_0(t)} \le M,\quad
\sup_{t\in[0,T]}\left|\hat{\mu}(t)\right| \le M,\quad 
\text{and} \quad \sup_{t\in[0,T]}\left|\hat{\sigma}(t)\right| \le M, \quad \text{a.s..}
\end{align}
Then, by a standard localization technique and using \eqref{HJB-x0}, we obtain 
\begin{align}
	\label{eq:v0_ineq}
	\Eb_{t,x}\left[v_0(T, X_0(T)) \right] = v_0(t, x) + \int_t^T \Eb_{t,x}\left[\mathcal{A}^{(p, \theta)}_0 v_0(s,X_0(s)) \right] \, \dd s \le v_0(t,x)
\end{align}
and, by the terminal condition in \eqref{HJB-x0} and the definition of $J_0^\theta$ in \eqref{eq:J_0}, 
\begin{align*}
	\Eb_{t,x}\left[v_0(T, X_0(T)) \right] = \Eb_{t,x} \left[ - \frac{1}{\delta_0} \, \ee^{-\delta_0 X_0(T)}\right] = J_0^\theta(t,x; p),
\end{align*}
in which $\Eb_{t,x}:= \Eb[\cdot | X_0(t) = x]$. 
Combining the above two results proves Assertion $(a)$.

Assertion $(b)$ is self-evident because the inequality in \eqref{eq:v0_ineq} becomes equality when $p = \pb^\theta$.
\end{proof}

We now apply Theorem \ref{thm:veri_in} to verify that the results in Theorem \ref{thm:insurer} hold as claimed. 
\begin{proof}[Verification for Theorem \ref{thm:insurer}]
	Consider a function $v_0$ given by 
	\begin{align*}
		v_0(t,x) = - \frac{1}{\delta_0} \, \ee^{-\delta_0 x}, \quad (t, x) \in [0,T] \times \Rb,
	\end{align*}
in which $f_0$ is defined in \eqref{eq:f0}. The proof of Theorem \ref{thm:insurer} shows that $v_0$ is a classical solution in the space of $\mathrm{C}^{1,2}([0,T] \times \Rb)$ to the HJB equation \eqref{HJB-x0}, then Assertion $(a)$ in Theorem \ref{thm:veri_in} holds. In fact, for such a $v_0$, the localization technique is not needed because we can directly show that the corresponding It\^o process is a martingale. To see that,  using \eqref{eq:bounds} and setting $X_0(0) = x_0$, we obtain
\begin{align*}
	\sup_{t\in[0,T]}\mathbb{E}\left[\ee^{ -2\delta_0X_0(t)} \right]
	=&\sup_{t\in[0,T]}\mathbb{E}\left[\exp\left( -2\delta_0x_0-2\delta_0\int_0^t\hat{\mu}(s)\dd s +2\delta_0\int_0^t\hat{\sigma}(s)\dd W(s)\right)\right]\\
	\leqslant& \exp\left( -2\delta_0x_0+2\delta_0 M  T\right)\sup_{t\in[0,T]}\mathbb{E}\left[\exp\left(2\delta_0\int_0^t\hat{\sigma}(s)\dd W(s)\right)\right]\\
	\leqslant& \exp\left( -2\delta_0x_0+2\delta_0 M T+2\delta_0^2 M^2 T\right),
\end{align*}
which in turn implies that
\begin{align*}
	&\mathbb{E}\left[\int_0^T\left|\hat{\sigma}(t) \partial_x v_0(t,X_0(t))\right|^2\dd t\right]
	\leqslant M^2 \sup_{t\in[0,T]}\mathbb{E}\left[\ee^{ -2\delta_0X_0(t)} \right] <\infty .
\end{align*}

From \eqref{eq:pb_1} and \eqref{eq:pb_2}, we easily see that 
\begin{align*}
	\pb_1^\theta(t) \in [0,1], \quad \pb^\theta_2(t) \in [0,1], \quad \text{and} \quad 1 - \pb^\theta_1(t) - \pb^\theta_2(t) \in [0,1],
\end{align*}
and $\pb^\theta = (\pb^\theta_1, \pb^\theta_2) = \argmin_{p\in \Uc_0} \, \Ac_0^{(p, \theta)} \, v_0$.  
All together proves that $\pb^\theta \in \Uc_0$ is admissible. Therefore, by Theorem \ref{thm:veri_in}, $\pb^\theta$ is indeed the insurer's optimal strategy, and $\Vb_0^\theta = v_0$ is the value function.
\end{proof}

\begin{proof}
	[Proof of Corollary \ref{prop:sen}]
	By recalling $\varphi_i$ in \eqref{eq:Theta}, $\varphi_i := \varphi_i(x| \delta_0, \delta_1, \delta_2, \lam_1, \lam_2)$ also depends on the parameters $\delta_0, \delta_1, \delta_2, \lam_1, \lam_2$. Although previously we have treated $\varphi_i$ as a univariate function of $x$ in Section \ref{sec:eq}, we will treat $\varphi_i$ as a multivariate function of these parameters in the subsequent sensitivity analysis. As such, $\varphi_i'(x)$ is Section \ref{sec:eq} translates to $\partial \varphi_i / \partial x$ here.

	First, we compute the partial derivatives of $\varphi_i$ with respect to each parameter by 
	\begin{align}
		\frac{\partial \varphi_i}{\partial \delta_0} &= \frac{2x^4}{(2x^2+((1+2\lambda_j)\delta_0+2\lambda_j\delta_i)x+\lambda_j(1+\lambda_j)\delta_0\delta_i)^2}>0 , \\
		\frac{\partial \varphi_i}{\partial \delta_i} &= \frac{x^2(\delta_0(1+\lambda_j)+2x)}{(2x^2+((1+2\lambda_j)\delta_0+2\lambda_j\delta_i)x+\lambda_j(1+\lambda_j)\delta_0\delta_i)^2}>0, \\
		\frac{\partial \varphi_i}{\partial \delta_j} &= 0, \\
		\frac{\partial \varphi_i}{\partial \lambda_i} &= 0, \\
		\frac{\partial \varphi_i}{\partial \lambda_j} &= -\frac{2(\delta_0^2+2\delta_0\delta_i+2\delta_i^2)x^3+2\delta_0\delta_i(\delta_0+2\delta_i)(1+\lambda_j)x^2+\delta_0^2\delta_i^2(1+\lambda_j)^2x}{(2x^2+((1+2\lambda_j)\delta_0+2\lambda_j\delta_i)x+\lambda_j(1+\lambda_j)\delta_0\delta_i)^2} < 0 .
	\end{align}

Because $\theta_i^* = \varphi_i(\theta_j^*)$, we have 
\begin{align}
	\frac{\partial\theta_i^*}{\partial\delta_0}= \frac{\partial \varphi_i }{\partial x}\Big|_{x=\theta_j^*} \cdot \frac{\partial\theta_j^*}{\partial\delta_0}+\frac{\partial \varphi_i}{\partial\delta_0}\Big|_{x=\theta_j^*}.
\end{align}
Solving the above equations, we get 
\begin{align}
	\frac{\partial\theta_i^*}{\partial\delta_0} = \frac{\frac{\partial \varphi_i }{\partial x}\Big|_{x=\theta_j^*} \cdot \frac{\partial \varphi_j }{\partial \delta_0}\Big|_{x=\theta_i^*} + \frac{\partial \varphi_i }{\partial \delta_0}\Big|_{x=\theta_j^*} }{1 - \frac{\partial \varphi_i }{\partial x}\Big|_{x=\theta_j^*} \cdot \frac{\partial \varphi_j }{\partial x}\Big|_{x=\theta_i^*}}> 0,
\end{align}
in which we have used \eqref{eq:derivative2}, \eqref{eq:fixed_point} and \eqref{eq:deri_cond} to derive the inequality
\begin{align}
	\frac{\partial \varphi_i }{\partial x}\Big|_{x=\theta_j^*} \cdot \frac{\partial \varphi_j }{\partial x}\Big|_{x=\theta_i^*} < 1.
\end{align}
Following a similar argument, we show the sensitivity results of $\theta_i^*$ in \eqref{eq:theta_sen} hold.

Next, we aim to analyze the sensitivity of $p^*$ with respect to those model parameters. To that end, recall that $p_i^* = \pb_i^{\theta^*}$, in which $\pb_1^\theta$ and $\pb_2^\theta$ are given by \eqref{eq:pb_1} and \eqref{eq:pb_2}, respectively. Similar to the above, we now treat each $p_i^*$ as a function of $\theta_1^*$, $\theta_2^*$, and model parameters, and compute the following partial derivatives: 
\begin{align}
	\frac{\partial p_i^*}{\partial \delta_0} &= \frac{2\theta_i^*(\theta_j^*)^2}{(2\theta_1^*\theta_2^*+\delta_0(\theta_1^*+\theta_2^*))^2}>0, 
	\\
	\frac{\partial p_i^*}{\partial \theta_i^*} &= -\frac{\delta_0\theta_j^*(\delta_0+2\theta_j^*)}{(2\theta_1^*\theta_2^*+\delta_0(\theta_1^*+\theta_2^*))^2}<0, 
	\\
	\frac{\partial p_i^*}{\partial \theta_j^*} &= \frac{\delta_0^2\theta_i^*}{(2\theta_1^*\theta_2^*+\delta_0(\theta_1^*+\theta_2^*))^2}>0.
\end{align}
Let us study the impact of $\delta_0$ on $p_1^*$ first and note 
\begin{align}
	\frac{\partial p_1^*}{\partial \delta_0} = \frac{\partial \pb_1}{\delta_0} \Big|_{\theta = \theta^*} + \frac{\partial \pb_1}{\partial \theta_1^*}\Big|_{\theta = \theta^*} \cdot \frac{\partial \theta_1^*}{\partial \delta_0} + \frac{\partial \pb_1}{\partial \theta_2^*}\Big|_{\theta = \theta^*} \cdot \frac{\partial \theta_2^*}{\partial \delta_0},
\end{align}
in which the first and third terms are positive, but the second term is negative. As such, a definite monotonicity result of $p_1^*$ with respect to $\delta_0$ is not available in general. By following the same argument, one can see that such a negative result applies to all partial derivatives of $p_i^*$.
\end{proof}

\begin{proof}[Proof of Corollary \ref{cor:limit}]
	{Since both $\lam_1$ and $\lam_2$ are positive constants, let us fix $\lam_1 > 0$ and consider $\lam_2 \nearrow \frac{1}{\lam_1}$ in the proof. By \eqref{eq:theta_sen}, for a fixed $\lam_1$, both $\theta_1^*$ and $\theta_2^*$ are strictly decreasing with respect to $\lam_2$. Therefore, when $\lam_2$ approaches $ \frac{1}{\lambda_1}$ form below, both $\theta_1^*$ and $\theta_2^*$ converge (decreasingly) to some nonnegative number, which we denote by $\theta^\circ_1$ and $\theta^\circ_2$, respectively. Our goal is then to show that $\theta^\circ_1 = \theta^\circ_2 = 0$. In what follows, we only show that $\theta^\circ_1 = 0$, as  the argument also applies to the proof of $\theta^\circ_2 = 0$.}
	
	{Assume to the contrary that $\theta^\circ_1 > 0$. Because the risk aversion parameters ($\delta_0, \delta_1, \delta_2$) and $\lam_1$ are fixed, and only $\lam_2$ varies, we will write functions $\varphi_i(\cdot)$ and $\varphi_i^{-1}(\cdot)$ as $\varphi_i(\cdot| \lam_2 )$ and $\varphi_i^{-1}(\cdot | \lam_2)$ to emphasize their dependence on $\lam_2$. From the proof of Theorem \ref{thm:eq}, we have 
		\begin{align*}
			\varphi_2(x|\lambda_2)>\varphi_1^{-1}(x|\lambda_2) \quad \text{for all } x \in (0, \theta_1^*).
		\end{align*}
		From the assumption $\theta^\circ_1 > 0$, we deduce $0<\theta^\circ_1/2<\theta^\circ_1\leq \theta^*_1$, which, together with the above result, implies 
		\begin{align}
			\label{eq:circ}
			\varphi_2(\theta^\circ_1/2|\lambda_2)>\varphi_1^{-1}(\theta^\circ_1/2|\lambda_2). 
		\end{align}
		However, using  the properties of $\varphi_1$ and $\varphi_2$ (recall their definitions in \eqref{eq:Theta}), we obtain, for $0 < \lam_1 \lam_2 < 1$, that 
		\begin{align*}
			\lim_{\lambda_2\nearrow\frac{1}{\lambda_1}} \varphi_1^{-1} (\theta^\circ_1/2|\lambda_2) &= \varphi_1^{-1} (\theta^\circ_1/2| 1 / \lam_1) > \varphi_2 (\theta^\circ_1/2| 1 / \lam_1) = \lim_{\lambda_2\nearrow\frac{1}{\lambda_1}} \varphi_2^{-1} (\theta^\circ_1/2|\lambda_2),
		\end{align*}
		a contradiction to \eqref{eq:circ}. Therefore, $\theta^\circ_1 = 0$ holds.}
	
	{When $\theta_1 \nearrow 0$ and $\theta_2 \nearrow 0$, the variance premium principle reduces to the actuarially fair premium, and a risk averse insurer will buy \emph{full} insurance, that is, $\lim_{\lam_1 \lam_2 \nearrow 1} \, (p_1^* + p_2^*) = 1$.}
\end{proof}

\bibliographystyle{apalike}
\bibliography{reference}

\end{document}